\newif\ifmc
\newtheorem{fact}{Fact}[section]
\setlist[description]{font=\normalfont\itshape}
\newtheorem{lemma}[fact]{Lemma}
\newtheorem{corollary}[fact]{Corollary}
\newtheorem{example}[fact]{Example}
\newtheorem{definition}[fact]{Definition}
\newtheorem{theorem}[fact]{Theorem}
\newcommand{\tA}{\sigma}       % formula names
\newcommand{\B}{\Gamma}
\newcommand{\db}{\displaystyle}
\renewcommand{\l}{\lambda}
\newcommand{\p}{{\sf p}}
\newcommand{\red}{\longrightarrow}
\newcommand{\set}[1]{\ensuremath{\{#1\}}}
\newcommand{\labelx}[1]{\label{#1}}
\newcommand{\Gint}[3]{[ \! [#1] \! ]^{#2}_{#3}}
\newcommand{\Tenv}{{\cal W}}
\newcommand{\Lenv}{{\cal V}}
\newcommand{\Mod}{{\cal M}}
\newcommand{\rank}{\text{rank}}
\newcommand{\trp}[3]{\langle#1,#2,#3\rangle}
\newcommand{\fmod}{{\cal F}}
\newcommand{\sub}{{\sf{s}}}
\begin{document}

 \title{Retractions in Intersection Types\thanks{Partially supported by EU H2020-644235 Rephrase project, EU H2020-644298 HyVar project, ICT COST Actions IC1201 BETTY, IC1402 ARVI, IC1405 Reversible Computation, CA1523 EUTYPES, Ateneo/CSP project RunVar and STIC-AmSud project FoQCoSS.}
}

\author{
Mario Coppo$^1$
\quad
Mariangiola Dezani-Ciancaglini$^1$
\and
Alejandro D\'{\i}az-Caro$^{2,1}$
\quad
Ines Margaria$^1$
\quad
Maddalena Zacchi$^1$
\institute{${}^1$ Dipartimento di Informatica Universit\`a di Torino,
corso Svizzera 185, 10149 Torino, Italy\\
${}^2$ CONICET \& Universidad Nacional de Quilmes, Roque S\'aenz Pe\~na 352, B1876BXD Bernal, Buenos Aires, Argentina}
%\email{\{coppo,dezani,ines,zacchi\}@di.unito.it,alejandro@diaz-caro.info}
}

\def\titlerunning{Retractions in Intersection Types %\today%Invertibility
}
\def\authorrunning{M. Coppo,  M. Dezani-Ciancaglini, A. D\'{\i}az-Caro, I. Margaria \& M. Zacchi}

%\begin{document}

\maketitle

%\today
%
%\bigskip

\begin{abstract}
  This paper deals with retraction - intended as isomorphic embedding - in intersection types building  left and right inverses as terms of a $\lambda$-calculus with a $\bot$ constant. The main result is a necessary and sufficient condition  two strict intersection types must satisfy in order to assure the existence of two terms showing  the first type to be a retract of  the second one. Moreover, the characterisation of retraction in the standard intersection types is discussed.
\end{abstract}

\section{Introduction}\label{i}
\emph{Isomorphism} of types has been first discussed in the seminal paper~\cite{BL85} and then studied in various type disciplines~\cite{S83,S93,BDL92,D95,D05,FDB06,DDGT10,CDMZ13,CDMZ14,CDMZ15,CDMZ16}. Two types $\sigma$ and $\tau$ in some typed
calculus are isomorphic if there are two terms
$L$ and $R$ of types $\tau\to\sigma$ and $\sigma\to\tau$, respectively, such that the composition $L\circ R$ is  equal to the identity at type $\sigma$ and  the composition $R\circ L$ is  equal to the identity at type $\tau$.

\medskip

We claim that in programming practice and theory the notion of \emph{retraction} between types plays a central role, and it is more widespread than that of type isomorphism. Type $\sigma$ is a retract of type $\tau$ in some typed
calculus if there are two terms
$L$ and $R$ of types  $\tau\to\sigma$  and  $\sigma\to\tau$, respectively, such that the composition $L\circ R$ is  equal to the identity at type $\sigma$. Clearly $L$  is \emph{right invertible} and $R$ is \emph{left invertible}. The terms $R$ and $L$ are injective and surjective, respectively, on their domains. They are also called the {\em coder} and the {\em decoder} of type $\sigma$ in type $\tau$~\cite{P01,S08}. In fact, the term $R$ encodes the values in $\sigma$ as elements of $\tau$, while the term $L$ decodes back from $\tau$ to $\sigma$, by returning the original values.

\medskip

To the best of our knowledge, type retraction as defined above has been discussed for Curry types and higher-order types~\cite{BL85,LPS92,SU99,P01,RU02,S08,S13}, but not for intersection types. Aim of this paper is to make a first step toward the filling of this gap. We consider the $\lambda \bot$-calculus as given in~\cite[Definition 14.3.1]{B84}. For this calculus the terms having left and right inverses have been characterised~\cite{MZ83}. We reformulate this characterisation in order to simplify the study of the types which can be derived for these terms. In particular we identify a class of right inverses (the  \emph{simple right inverses}) such that if a term has a right inverse it has also a right inverse belonging to this class. All results in this paper are  given  for the $\lambda\bot$-calculus but they hold, as well, for the $\lambda$-calculus. 
\medskip

We choose to investigate retraction for the essential intersection type assignment system introduced in~\cite{B11}. This system has the same typeability power of the standard system~\cite{barecoppdeza83} and a less permissive type syntax. This restriction better fits the technical development of the present paper, as discussed in the Conclusion.
%%% Added by Alejandro
Nevertheless, we also provide a result for standard intersection types in the case where the right inverse is assumed to be a simple right inverse.
%, i.e.~an abstraction projecting the first variable.
%%%%%%%%%%%%%%%%

\medskip

The first contribution of this paper is the characterisation of the strict intersection types which can be derived for terms having left or right inverses. Building on this result we give a necessary and sufficient condition for the existence of a retraction between two strict intersection types. This condition is a generalisation of the one defined in ~\cite{BL85} for Curry types.  We show that each retraction can be witnessed by a simple right inverse. We also prove that if $\mu$ is a retract of $\nu$ and $\nu$ is a retract of $\mu$, then $\mu$ and $\nu$ are equivalent with respect to the usual subtyping relation of intersection types. Then we discuss retraction in standard intersection types. Finally we define \emph{semantic retraction} as the natural adaptation to models and we show that retraction and semantic retraction coincide. This proof uses the completeness of the filter model given in~\cite{barecoppdeza83}.

\paragraph{Outline} Section~\ref{rs} introduces the $\lambda\bot$-calculus and characterises terms having left and right inverses. The essential intersection type assignment system is defined in Section~\ref{it} together with the characterisation of the types derivable for terms with left or right inverses. The results on retraction in strict intersection types are the content of Section~\ref{ec}. In Section~\ref{git} we extend the characterisation of retractions to standard intersection types assuming that right inverses are simple.  The notion of retraction is shown equivalent to that of semantic retraction in Section~\ref{sem}. Related work is overviewed in Section~\ref{rw}. Section~\ref{c} discusses our choices and future work.

\section{Left/Right Invertible Terms}\label{rs}

Following~\cite[Definition 14.3.1]{B84}, $\Lambda \bot$ is  the set of terms obtained by adding a constant $\bot$ to the formation rules of  $\lambda$-terms. The terms of  $\Lambda \bot$ are generated by the syntax\[M::=x \mid \bot\mid\lambda x.M\mid MM\]
where $x$ ranges over a denumerable set of term variables.

The reduction rules of the $\lambda\bot$-calculus include the $\beta$-rule and two rules for $\bot$ prescribing that both application and abstraction of $\bot$ reduce to $\bot$.
\[(\lambda x.M)N\red M\set {N/x} \qquad\qquad\qquad
\bot M \red \bot\qquad\qquad\qquad
\lambda x.\bot \red \bot\]
The equality between terms is defined as $\beta\bot$-conversion, i.e. $M=N$ means that there is a term $P$ such that both $M$ and $N$ reduce to $P$.

Let ${\bf{I}}=\lambda x.x$ and  $M \circ  N = {\bf{B}}MN$, where ${\bf{B}}=\lambda xyz. x(yz)$.
We are interested in investigating the monoid of terms with the combinator ${\bf{I}}$ as identity element and  $\circ$ as binary operation.  This naturally leads to the following definition of left/right invertibility.

\begin{definition}[Left/right invertibility]\label{inv}~
  \begin{enumerate}\item A term  $M$ is \emph{left invertible} if there exists a term $L$  such that $L\circ M = \bf{I}$. We say that $L$ is a {\em left inverse} of $M$.
 \item A term  $M$ is \emph{right invertible} if there exists a term $R$  such that $M \circ R = \bf{I}$. We say that $R$ is a {\em right inverse} of $M$.
 \end{enumerate}
\end{definition}

The left/right invertibility has been studied since the seventies.
In particular the sets of terms having at least one left or one right  inverse have been characterised in~\cite{BD74} and~\cite{MZ83}.

\medskip

The characterisation of left invertible terms resorts to a set of head normal forms (hnfs for short). We recall that a hnf is a term of the shape $\lambda x_1\ldots x_n.x_j  M_1\ldots M_m$~\cite[Definition 2.2.11]{B84}.  We define a set $\Xi$ of hnfs and we show that a term is left invertible iff it reduces to a hnf belonging to $\Xi$ (Theorem~\ref{left}).

\begin {definition}\label{leftinv}
Let $\Xi$ be the set of  hnfs inductively defined as follows:
\begin{itemize}
\item $\lambda tx_1\ldots x_n.t   \in \Xi $ for $n\geq 0$;
\item if $\lambda tx_1\ldots x_n.  M_ i \in \Xi $,  then  $\lambda tx_1\ldots x_n.x_j  M_1\ldots M_i \ldots M_m \in \Xi $,  where $1 \leq  j \leq n$ and $1\leq i\leq m$.

\end {itemize}
\end {definition}

\begin{example} \label{ex0}
  ~
\begin{itemize}
\item[i)] The term $\lambda tx.xx(xt)\in\Xi $ because $\lambda tx.xt\in\Xi $, and in turn $\lambda tx.xt\in\Xi $ since $\lambda tx.t\in\Xi $.
\item[ii)]  The term $M = \lambda t x_1 x_2 . x_2 (\lambda x_3 . t)(x_1 t) \in\Xi $ since  $\lambda t x_1 x_2 x_3 .t \in \Xi$. We can also show that $M \in \Xi$ because  $\lambda t x_1 x_2 . x_1 t \in \Xi$ and in turn $\lambda t x_1 x_2 . x_1 t \in \Xi$ since $\lambda t x_1 x_2 . t \in\Xi$.
\end{itemize}
\end{example}

 Point \emph{ii)} of Example~\ref{ex0} shows that there is not, in general, a unique way to derive that a term belongs to $\Xi$. In the following, when we write $M\in\Xi $, we refer to a particular proof, chosen according to Definition~\ref{leftinv}.

In order to build a left inverse of a term in $\Xi$ we need to ``reach'' an occurrence of the first abstracted variable (called $t$ in Definition~\ref{leftinv}). We use some machinery inspired by the B\"ohm-out technique~\cite[\S 10.3]{B84}. In a hnf $\lambda x_1\ldots x_n.x_j  M_1\ldots M_m$ the number of initial abstractions is $n$,  the variable $x_j$ (bound in the $j$-th abstraction) is the head variable and $M_1,\ldots, M_m$ are the $m$ components. Following the definition of $\Xi$ we can associate with each term in $\Xi$ a list of integer triples, whose first element is the abstraction position of the head variable, whose second element is the number of components and whose third element is the position of the component used to show that the term belongs to $\Xi$ ($0$ if the component is missing). More precisely, using $^{\frown}$ to denote concatenation:

\begin{definition}\label{path} The \emph{path} $\pi(M)$ of the hnf $M\in\Xi$ is inductively defined by:
\begin{itemize}
 \item $\pi(\lambda tx_1\ldots x_n.t)=\trp 1 0 0$;
\item $\pi(\lambda tx_1\ldots x_n.x_j  M_1 \ldots M_m)=\trp {j+1} m i^{\frown}\pi(\lambda tx_1\ldots x_n.M_i)$ if $\lambda tx_1\ldots x_n.  M_ i \in \Xi$ is used to show $\lambda tx_1\ldots x_n.x_j  M_1 \ldots M_m\in\Xi$.
\end{itemize}
\end{definition}
\noindent Let $\p$ range over paths.
\begin{example}
We get $\pi(\lambda tx.xx(xt))=\trp222^{\frown}\trp211^{\frown}\trp100$. If $M$ is defined as in Point ii) of Example~\ref{ex0} we get either $\pi(M)=\trp321^{\frown}\trp100$ or $\pi(M)=\trp322^{\frown}\trp211^{\frown}\trp100$, according to the proof used to show $M\in\Xi$.
\end{example}
The triple $\trp {j+1} m i$ says that the variable bound in the $j+1$-th abstraction must choose the $i$-th component out of $m$ components to ``reach'' an occurrence of $t$. Then the variable bound in the $j+1$-th abstraction  needs to be replaced by the term $S_i^{(m)}=\lambda y_1\ldots y_m.y_i$. We call {\em selectors} the terms of the shown shape. This replacement becomes problematic if we have in the same path two triples with the same first element which differ in one of the other elements. Following~\cite{MZ83} we differentiate these occurrences using terms of the shape $P^{(m)}=\lambda z_1\ldots z_{m+1}.
z_{m+1} z_1 \ldots z_m$ with $m\geq 1$ that we dub {\em permutators}.  We convene that $P^{(0)}=\bot$. We need two preliminary definitions and a technical lemma.

By $\ell(\p)$ we denote the length, i.e. the number of triples, of the path $\p$.

We define $\#(j+1,\p)$ as the maximum of the second components of triples in path $\p$ whose first component is $j+1$. We assume $\#(j+1,\p)=0$ if $j+1$ does not occur in $\p$ as first component. More formally:
\[ \#(j+1,\trp 1 0 0)=0\qquad\qquad\qquad\#(j+1,\trp h m i^{\frown}\p)=\begin{cases}
\text{max}(m, \#(j+1,\p))     & \text{if }h=j+1, \\
  \#(j+1,\p)    & \text{otherwise}
\end{cases}\]

\begin{lemma}\label{aux}
Let $M=\lambda tx_1\ldots x_n.M'\in\Xi$ and $m_j\geq\#(j+1,\pi(M))$ for $1\leq j\leq n$. Then \[\lambda t.M'\set{P^{(m_1)}/x_1}\ldots \set{P^{(m_n)}/x_n}=Q\in \Xi\] and $\ell(\pi(Q))=\ell(\pi(M))$.
\end{lemma}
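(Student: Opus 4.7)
The plan is to proceed by structural induction on the derivation that $M \in \Xi$ (equivalently, on $\ell(\pi(M))$), relying on one auxiliary observation which I establish first by a straightforward induction on the derivation: \emph{if $\lambda t x_1 \ldots x_n. N \in \Xi$ and $y$ does not occur free in $N$, then $\lambda t x_1 \ldots x_n y. N \in \Xi$ with the same path}. Both clauses of Definition~\ref{leftinv} are compatible with one extra trailing abstraction (the constraint $1\le j\le n$ passes to $1\le j\le n+1$), so the triples of the path are unchanged.

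For the base case $M = \lambda t x_1 \ldots x_n.\, t$, the body contains no $x_k$, the substitution is vacuous, and $Q = \lambda t.\, t \in \Xi$ with $\pi(Q) = \trp{1}{0}{0}$.

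For the inductive case, $M = \lambda t x_1 \ldots x_n.\, x_j M_1 \ldots M_m$ with chosen component $\lambda t x_1 \ldots x_n. M_i \in \Xi$. The triple $\trp{j+1}{m}{i}$ heads $\pi(M)$, so the hypothesis forces $m_j \ge \#(j+1,\pi(M)) \ge m$; the remainder of $\pi(M)$ is $\pi(\lambda t x_1 \ldots x_n. M_i)$, and the same $m_k$'s still bound $\#(k+1,\cdot)$ on this suffix, so the induction hypothesis yields $\lambda t.\, M_i\set{\vec P/\vec x} = Q' \in \Xi$ with $\ell(\pi(Q')) = \ell(\pi(M)) - 1$. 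Meanwhile the substitution at the head produces $P^{(m_j)}\, M_1\set{\vec P/\vec x}\ldots M_m\set{\vec P/\vec x}$, and partial $\beta$-reduction of the permutator (only $m \le m_j$ arguments are available, not $m_j+1$) converts the whole term into
\[
\lambda t z_{m+1}\ldots z_{m_j+1}.\; z_{m_j+1}\, M_1\set{\vec P/\vec x}\,\ldots\, M_m\set{\vec P/\vec x}\, z_{m+1}\,\ldots\, z_{m_j}
\]
for fresh $z$'s. Applying the auxiliary observation $m_j - m + 1$ times to $Q'$---permissible because the fresh $z$'s are not free in $M_i\set{\vec P/\vec x}$---shows $\lambda t z_{m+1}\ldots z_{m_j+1}. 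M_i\set{\vec P/\vec x} \in \Xi$ with the same path as $Q'$. Invoking the inductive clause of Definition~\ref{leftinv} with chosen component index $i$ (which still picks out $M_i\set{\vec P/\vec x}$, since the dummy $z$'s are appended on the right) then shows $Q \in \Xi$ with $\pi(Q) = \trp{m_j-m+2}{m_j}{i}^{\frown}\pi(Q')$, whence $\ell(\pi(Q)) = 1 + \ell(\pi(Q')) = \ell(\pi(M))$.

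I expect the main obstacle to be the bookkeeping when $m < m_j$: the permutator is under-applied, $m_j - m + 1$ fresh variables are pushed outside, the head shifts from position $j+1$ to $m_j - m + 2$, and the component list is padded by those fresh variables. One must check that the original index $i$ still picks out the substituted $M_i$ and that the auxiliary observation can absorb the new trailing abstractions into the induction hypothesis; once these pieces are aligned, preservation of the path length is immediate.
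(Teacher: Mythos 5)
Your strategy is the paper's own: induction on the derivation of $M\in\Xi$, an explicit trailing-abstraction observation (which the paper uses silently), and the computation showing that the under-applied permutator $P^{(m_j)}$ absorbs the $m$ available arguments and pushes $m_j-m+1$ fresh abstractions outward, giving the new head triple $\trp{m_j-m+2}{m_j}{i}$ and hence length preservation. In the cases you treat, this is correct.

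The gap is that your case analysis is not exhaustive, and the missing case is forced on you by your own recursive calls. The lemma peels off exactly $n+1$ abstractions where $n$ is the number of \emph{substituted} variables, but an element of $\Xi$ decomposed this way has the shape $\lambda tx_1\ldots x_n.M'$ with $M'=\lambda x_{n+1}\ldots x_q.t$ or $M'=\lambda x_{n+1}\ldots x_q.x_rM_1\ldots M_m$ for some $q\ge n$ and $1\le r\le q$: the body $M'$ may begin with further abstractions that are \emph{not} in the domain of the substitution, and the head variable may be one of them ($r\ge n+1$). You only handle $q=n$ and head variable $x_j$ with $j\le n$. This cannot be dodged: your inductive step invokes the IH on $\lambda tx_1\ldots x_n.M_i$ with the same $n$, and whenever the component $M_i$ is itself an abstraction (e.g.\ $M=\lambda tx.x(\lambda y.yt)$ with $n=1$, whose recursive call is on $\lambda txy.yt$ with head variable $y$ untouched by the substitution) the term falls outside both of your cases, so the induction does not close. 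The paper covers this with a separate, easier branch ($r\ge n+1$): no permutator reaches the head, so $Q=\lambda tx_{n+1}\ldots x_q.x_rM_1\sub\ldots M_{i-1}\sub\,Q'\,M_{i+1}\sub\ldots M_m\sub$ with the triple $\trp{r+1}{m}{i}$ unchanged. (A minor related point: when the $z$'s are inserted before abstractions internal to $M_i\sub$, the first components of later triples can shift, so your auxiliary observation should claim preservation of path \emph{length} rather than of the path itself --- which is all the lemma asserts anyway.) Once you add the extra abstractions $x_{n+1},\ldots,x_q$ to all three cases and the $r\ge n+1$ branch, your argument coincides with the paper's.
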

\begin{proof}
%Let $m_h$ be the maximum between the second elements in the triples of $\pi(M)$ whose first element is $h$ for $2\leq h\leq n+1$. We can choose $N=\lambda y. y N_1\ldots N_n$, where \[N_{h-1}=\begin{cases}
%\lambda z_1\ldots z_q. z_q z_1\ldots z_{q-1}      & \text{if }m_h\text{ is defined, and }q=m_h+1, \\
%  \bot    & \text{otherwise}.
%\end{cases}\]
%We show that $N\circ M$ satisfies the statement of the lemma
The proof is by induction on the definition of $\Xi$.
If $M'=\lambda x_{n+1}\ldots x_q.t$, then $Q=\lambda tx_{n+1}\ldots x_q.t$.

\noindent
Let  $M'=\lambda x_{n+1}\ldots x_q.x_r M_1\ldots M_i\ldots M_m$ and $M\in\Xi$  since $  \lambda tx_1\ldots x_q.  M_ i\in\Xi$. Because $m_j\geq\#(j+1,\pi(M))$ implies $m_j\geq\#(j+1,\pi( \lambda tx_1\ldots x_q.  M_ i))$ for $1\leq j\leq n$, by induction we get \[\lambda tx_{n+1}\ldots x_q.  M_ i\sub=\lambda tx_{n+1}\ldots x_q.  Q'\in\Xi\] where $\sub$ is the substitution $\set{P^{(m_1)}/x_1}\ldots \set{P^{(m_n)}/x_n}$. If $r\geq n+1$ we can take \[Q=\lambda t x_{n+1}\ldots x_q.x_r M_1\sub\ldots M_{i-1}\sub Q'M_{i+1}\sub\ldots M_m\sub\]
Otherwise

$\begin{array}{lll}
M'\sub&=&\lambda t x_{n+1}\ldots x_q. P^{(m_r)}M_1\sub\ldots M_{i-1}\sub Q'M_{i+1}\sub\ldots M_m\sub\\
&= &\lambda t x_{n+1}\ldots x_qz_{m+1}\ldots z_{m_r+1}.z_{m_r+1}M_1\sub\ldots M_{i-1}\sub Q'M_{i+1}\sub\ldots M_m\sub z_{m+1}\ldots z_{m_r}\end{array}$\\
and we can take this last hnf as $Q$ since $\lambda tx_{n+1}\ldots x_q.  Q'\in\Xi$ implies $\lambda t x_{n+1}\ldots x_qz_{m+1}\ldots z_{m_r+1}.Q'\in\Xi$.

In all cases it is easy to verify that $\ell(\pi(Q))=\ell(\pi(M))$.
\end{proof}

\begin{example}\label{ex1}
Let $M=\lambda tx.M'$ where $M'= xx(xt)$. We get
%Let $M=\lambda tx.xx(xt)$, then $M'= xx(xt)$. Applying the above construction we get \newline
\[\lambda t. M'\set{P^{(2)}/x}=\lambda t z_1.z_1(\lambda z_2z_3z_4. z_4 z_2 z_3) (\lambda z_5z_6. z_6 t z_5)=Q\] We have $\pi(Q)=\trp222^{\frown}\trp421^{\frown}\trp100$. \end{example}

\medskip

\begin{theorem}\label{left}
A term has at least one left inverse if and only if it reduces to a hnf  $M$ in $\Xi$.
\end{theorem}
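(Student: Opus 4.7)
The plan is to prove the two directions separately: the explicit construction of a left inverse for terms in $\Xi$ (the ``if'' direction), and the structural argument that left invertibility forces reduction to a member of $\Xi$ (the ``only if'' direction). Since $=$ denotes $\beta\bot$-conversion and $\beta\bot$ is confluent, I may assume without loss of generality that the term in question \emph{is} (rather than just reduces to) a hnf when convenient.

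For the ``if'' direction, suppose $M = \lambda t x_1 \ldots x_n. M' \in \Xi$. The goal is to build $L$ with $L\circ M = \mathbf{I}$, equivalently $L(Mt) = t$. Note that $Mt \red \lambda x_1 \ldots x_n. M'$. I will define $L$ so that it first feeds $Mt$ the permutators $P^{(m_1)}, \ldots, P^{(m_n)}$, with $m_j = \#(j+1,\pi(M))$; by Lemma~\ref{aux}, the resulting term reduces to a hnf $Q \in \Xi$ of the same path length as $M$, in which every occurrence of a former $x_j$ has been turned into a permutator applied to the same fixed number of arguments. Next, $L$ supplies selectors $S_{i_k}^{(m_k)}$ in the order dictated by $\pi(Q) = \trp{h_1}{m_1}{i_1}^{\frown}\cdots^{\frown}\trp{h_{\ell-1}}{m_{\ell-1}}{i_{\ell-1}}^{\frown}\trp{1}{0}{0}$: each permutator moves its last argument to head position, and the following selector chooses the correct component that, by construction of the path, leads one step closer to $t$. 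After $\ell(\pi(Q))$ such steps, what remains is $t$ itself. Hence I set
\[ L \;=\; \lambda f.\,f\, P^{(m_1)}\cdots P^{(m_n)}\, S_{i_1}^{(m_1')}\cdots S_{i_{\ell-1}}^{(m_{\ell-1}')}, \]
where the $m_k'$ are read off from $\pi(Q)$, and verify $L(Mt) \red t$ by induction on $\ell(\pi(M))$, using the definition of $\Xi$ and the reduction rules of permutators and selectors.

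For the ``only if'' direction, assume $L\circ M = \mathbf{I}$, equivalently $L(Mx) = x$ for a fresh $x$. First I rule out that $M$ is unsolvable: if $M$ had no hnf then $M = \bot$, so $Mx \red \bot$ and $L(Mx) \red L\bot$, a closed term, which cannot equal the free variable $x$. Thus $M$ reduces to a hnf $\lambda t x_1 \ldots x_n.\, x_j M_1 \ldots M_m$ (taking $t = x_0$). I then argue by induction on some well-chosen measure (for instance, the number of head reductions needed to extract $x$ from $L(Mx)$, or the number of arguments $L$ must supply) that this hnf lies in $\Xi$. The crucial structural observation is: if the head variable of the hnf is $t$ itself ($j = 0$), then the hnf must have the form $\lambda t x_1 \ldots x_n.\, t$, because any further arguments of $t$ would block $L$ from producing $x$; this is the base case of $\Xi$. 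Otherwise $j \geq 1$, and $L$ must apply $Mx$ to arguments so that the head variable $x_j$ is replaced by a term selecting one component $M_i$ in which $t$ can still be reached; the inductive hypothesis applied to $\lambda tx_1\ldots x_n.M_i$ (which is also left invertible, by composing $L$ with the selection process) shows $\lambda tx_1\ldots x_n.M_i \in \Xi$, which is exactly the inductive clause of Definition~\ref{leftinv}.

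The main obstacle is the ``only if'' direction, and specifically making precise the claim that from $L(Mx) = x$ one can extract, for the chosen component index $i$, a new left inverse for $\lambda tx_1 \ldots x_n. M_i$. This requires a careful head-reduction analysis in the style of the B\"ohm-out technique referenced in the paper: one must show that among the arguments $L$ feeds to $Mx$, the one substituted for $x_j$ must act as a selector on its $m$ inputs (else $x$ could not emerge), which identifies the component index $i$ and reduces the problem to a strictly smaller instance. Confluence and the characterization of $\beta\bot$-equality on solvable terms are used throughout to justify passing between $M$ and its hnf.
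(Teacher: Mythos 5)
Your overall plan (permutators to disambiguate repeated variables, selectors to walk the path, a B\"ohm-out style analysis for the converse) is in the same spirit as the paper's proof, but both directions have real gaps. In the \emph{if} direction your construction is too coarse: you substitute permutators only for the \emph{outermost} abstracted variables $x_1,\ldots,x_n$ and then feed one selector per triple of the path. This breaks whenever a variable bound \emph{inside} a component occurs at two points of the path with different arities or different selected components. Take $M=\lambda t x.\,x(\lambda y.\,y\,y\,(y\,t))\in\Xi$, whose path is $\trp211^{\frown}\trp322^{\frown}\trp311^{\frown}\trp100$: the two conflicting triples concern $y$, which is not among the outer abstractions, so your recipe assigns it no permutator; the single term substituted for $y$ would have to act both as $S^{(2)}_2$ (at the head occurrence) and as $S^{(1)}_1$ (inside $y\,t$), and indeed $M\,t\,P^{(1)}\,{\bf I}\,S^{(2)}_2\,{\bf I}$ reduces to ${\bf I}$, not to $t$. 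The paper avoids this by inducting on $\ell(\pi(M))$ and applying the permutator composition $N=\lambda z.zP^{(m_1)}\ldots P^{(m_j)}$ \emph{at every recursion level} at which the current head variable still has a conflict, then splicing a selector into the $j$-th slot of the recursively obtained left inverse $\lambda z.zL'_1\ldots L'_q$ (that slot is $\bot$ precisely because the conflict has been removed). Your flat term $\lambda f.fP^{(m_1)}\cdots P^{(m_n)}S^{(m'_1)}_{i_1}\cdots$ also omits the filler arguments needed to reach the last abstraction $z_{m_r+1}$ created when a permutator fires.

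In the \emph{only if} direction you have, by your own account, left the key step open: the extraction of a left inverse for $\lambda tx_1\ldots x_n.M_i$ from $L(Mx)=x$ is exactly the content of the claim, and announcing that it "requires a careful head-reduction analysis" is not a proof of it. The paper argues instead by contraposition on the structure of the hnf: if no path exists, then every occurrence of $t$ either carries arguments (which, after $t$ is instantiated by the fresh variable $z$, become arguments of $z$ and can never be erased) or lies under a head variable that is free (and hence can never be instantiated by $L$), so $z$ cannot surface as the whole result. Separately, your disposal of unsolvable terms is incorrect: in this calculus an unsolvable term need not be convertible to $\bot$ (e.g.\ $(\lambda x.xx)(\lambda x.xx)$ is not, since the only $\bot$-rules are $\bot M\red\bot$ and $\lambda x.\bot\red\bot$), so you cannot conclude $Mx\red\bot$; a genericity-style argument is needed for this case as well.
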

\begin{proof}~
  \begin{description}
    \item[(If)]
      Let $M\in\Xi$.
      The proof is by induction on $\ell(\pi(M))$.
      Let $n$ be the number of initial abstractions and $\p$ be  the path of the term considered in the induction step.
      We build  a  left inverse of the shape $\lambda z. z L_1\ldots L_q$, where $q\geq n$ and $L_l=\bot$ whenever $\#(l+1,\p)=0$.

      If  $M=\lambda tx_1\ldots x_n.t$,  then $\lambda z.z\underbrace{\bot \bot \cdots \bot}_n$ is a left inverse of $M$.

      Let  $M=\lambda tx_1\ldots x_n.x_j M_1\ldots M_i\ldots M_m\in\Xi$  since $  \lambda tx_1\ldots x_n.  M_ i\in\Xi$. We distinguish two cases. In the first case the construction of the left inverse using a selector is easy. In the second case we compose $M$ with a term $N$ build out of permutators. The useful property is that $N\circ M=Q\in\Xi$ and  $\ell(\pi(Q))=\ell(\pi(M))$ and $Q$ satisfies the condition of case 1. We can then build a left inverse $L$ of $Q$ and $L\circ N$ is a left inverse of $M$.\\
      {\em Case 1}:  $\#(j+1,\pi(\lambda tx_1\ldots x_n.  M_ i))=0$.
      By induction hypothesis $  \lambda tx_1\ldots x_n.  M_ i\in\Xi$ has a left inverse $\lambda z. z L'_1\ldots L'_q$ and in this case $L'_j
      =\bot$. Then $\lambda z. z L'_1\ldots L'_{j-1}S^{(m)}_i L'_{j+1}\ldots L'_q$ is a left inverse of $M$.\\
      {\em Case 2}:  $\#(j+1,\pi(\lambda tx_1\ldots x_n.  M_ i))=m_j\not=0$. Let $m_l=\#(l+1,\pi(M))$ for $1\leq l\leq j$ and \[N=\lambda z.z P^{(m_1)}\ldots P^{(m_j)}\] By the proof of Lemma~\ref{aux} $N\circ M=Q\in\Xi$, where $Q$ is the hnf \[\lambda t x_{j+1}\ldots x_nz_{m+1}\ldots z_{m_j+1}.z_{m_j+1}M_{1}\sub\ldots M_m\sub z_{m+1}\ldots z_{m_j}\] and $\sub$ is the substitution $\set{P^{(m_1)}/x_1}\ldots \set{P^{(m_j)}/x_j}$. Since $z_{m_j+1}$ does not occur in $M_i\sub$, i.e. \[\#(n-j+m_j-m+2,\pi(\lambda tx_{j+1}\ldots x_nz_{m+1}\ldots z_{m_j+1}.  M_ i\sub))=0\]
      and $\ell(\pi(Q))=\ell(\pi(M))$ we can build a left inverse $L$ of $Q$ according to previous case. Then a left inverse of $M$ is $L\circ N$.

%\medskip

    \item[(Only if)] Let us suppose, ad absurdum, that a term has a left inverse and it is unsolvable or its hnf  doesn't belong to $\Xi$.  The first case is obvious. In the second case the hnf $M$ has  no path which satisfies Definition~\ref{path}. Therefore, if $M=\lambda t.N$, then there is no occurrence of $t$ in $N$ which is not applied and such that it is always in components whose head variables are bound. The arguments of $t$ cannot be erased by reduction and a free variable cannot be replaced in order to get $t$. So we conclude that $M$ has no left inverse.
      \qedhere
  \end{description}
\end{proof}

\begin{example}\label{ex2}
  Let $M, Q$ be as in Example~\ref{ex1}. The left inverse of  $Q$ built according to the lemma is $L=\lambda z. z (\lambda y_1y_2. y_2)\bot(\lambda y_1y_2. y_1)$. According to the proof of previous theorem we get $N=\lambda z. z P^{(2)}$. Then a left inverse of $M$ is  $L\circ N=\lambda z. z P^{(2)}(\lambda y_1y_2. y_2)\bot(\lambda y_1y_2. y_1)$.\end{example}

\medskip

The characterisations of terms having right inverses is easy.

\begin{theorem}\label{right}
  A term has at least one right inverse if and only if its hnf is of the shape: $\lambda z.zM_1 \ldots  M_m$.
\end{theorem}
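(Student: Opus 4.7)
The claim is an equivalence, and I would prove the two directions separately.

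For the (if) direction, I would construct a right inverse explicitly. Given that $M$ has a hnf $\lambda z.zM_1\ldots M_m$, take $R = \lambda y w_1\ldots w_m. y$, with the convention that $R={\bf I}$ when $m=0$. Then ${\bf B}MR$ $\beta$-reduces to $\lambda x. M(Rx)$, and the head $\beta$-step on $M(Rx)$ gives $(Rx)M_1\ldots M_m = (\lambda w_1\ldots w_m.x)M_1\ldots M_m$, which $\beta$-reduces to $x$. Thus $M\circ R={\bf I}$, so $R$ witnesses right-invertibility of $M$.

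For the (only if) direction, suppose $M\circ R = {\bf I}$. Since ${\bf I}$ is in normal form, Church--Rosser for $\beta\bot$ gives ${\bf B}MR\red^*{\bf I}$; after the two leading $\beta$-steps this becomes $\lambda z.M(Rz)\red^*\lambda z.z$. The rule $\lambda z.\bot\red\bot$ cannot close this derivation, since the endpoint is $\lambda z.z\neq\bot$, so the reduction takes place entirely under the binder and hence $M(Rz)\red^* z$ with $z$ fresh. In particular $M(Rz)$ has a head normal form, which rules out $M$ being unsolvable: an unsolvable term applied to anything is again unsolvable, and if $M\red^*\bot$ then $M(Rz)\red^*\bot\neq z$. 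Therefore $M$ itself has a hnf, say $\lambda x_1\ldots x_n.yP_1\ldots P_p$.

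The conclusion follows by comparing the hnf of $M(Rz)$ with that of $z$. The head structure of a hnf --- the number of initial abstractions and the identity of the head variable --- is an invariant of $\beta\bot$-conversion on solvable terms, a standard fact from~\cite[\S 14.3]{B84}. Applying $M$ to $Rz$ absorbs at most one leading $\lambda$, so $M(Rz)$ exhibits $\max(n-1,0)$ leading abstractions, while $z$ exhibits none; thus $n\leq 1$. If $n=0$ then $y$ is free in $M$ and would remain the head variable of every hnf of $M(Rz)$, contradicting that this hnf is the fresh variable $z$. So $n=1$, and the same free-variable argument forces $y=x_1$. Renaming $x_1$ to $z$ yields the required shape $\lambda z.zP_1\ldots P_p$.

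The main obstacle is justifying that the top-level structure of a hnf is invariant under $\beta\bot$-conversion for solvable terms; once this standard lemma is invoked, the counting of leading abstractions and the free-variable argument on the head position are mechanical.
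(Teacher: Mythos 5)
Your proof is correct and follows essentially the same route as the paper: the same selector $\lambda t x_1\ldots x_m.t$ witnesses the (if) direction, and for (only if) you use the same observation that unsolvability, surplus leading abstractions and a free head variable all persist under reduction and are therefore incompatible with $M(Rz)=z$, only spelled out more carefully via Church--Rosser and similarity of hnfs. One small imprecision: when the head variable of $M$'s hnf is $x_1$, the hnf of $M(Rz)$ has \emph{at least} $n-1$ leading abstractions rather than exactly $\max(n-1,0)$, since $Rz$ lands in head position and may contribute further abstractions --- but this is the direction of inequality you need, so the conclusion $n\leq 1$ still follows.
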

\begin{proof}~
  \begin{description}
    \item[(If)]
      A right inverse is $\lambda t x_1 \ldots x_m.t$.

    \item[(Only if)] An unsolvable term has no left inverse.  Let  suppose the hnf of a term  be not of the shape  $\lambda z.zM_1 \ldots  M_m$. Then it must have more than one abstraction and/or the head variable must be a free variable. In the first case the initial abstractions and in the second case the head free variable cannot be eliminated using reductions.
      \qedhere
  \end{description}
\end{proof}

\begin{example}\label{ex3}
  The term $M$ of Example~\ref{ex1} is a right inverse of the term  $L \circ N$ of Example~\ref{ex2}. The right inverse of $L \circ N$ built by the theorem is $\lambda t x_1x_2 x_3.t$.
\end{example}

From the proof of Theorem~\ref{right}  it is clear that if a term has a right inverse, then it has also a right inverse of the shape $\lambda t x_1 \ldots x_n.t$, i.e. a selector $S_1^{(n+1)}$. We call \emph{simple right inverses} the hnfs of this shape.

 %A
\section{Strict  Intersection Types}\label{it}

The type system considered in this paper is a notational variant of the essential  intersection assignment  introduced in~\cite{B11}.

\indent The set of {\em strict intersection types} is defined by:
\begin{center}
$\begin{array}{lll}
\mu&:=&\varphi~~\mid~\omega~~\mid ~\sigma \to\mu\\
\tA&:=&\mu~\mid~\sigma\wedge\sigma
\end{array}$
\end{center}

\noindent
where $\varphi$  ranges over type variables and $\omega$ is a constant.
We convene that  $\mu,\nu$ range over strict intersection types (either atomic or arrow types), while $\sigma , \tau  , \rho$   range over intersections.

\noindent
Conventionally, we omit parentheses according to the precedence rule ``$\wedge$ over $\rightarrow$" and we assume that $\to$ associates to the right.
Intersections are considered  modulo idempotence, commutativity and associativity  of $\wedge$. In this section and in the following one we use type as short for strict intersection type.

\smallskip

A preorder relation $\leq$, %, informally
representing set inclusion, is assumed between  types and intersections.

\begin{definition}\label{tam}
Let $\leq$ be the minimal reflexive and transitive  relation such that:
$$
\begin{array} {c}
\sigma\leq\omega\qquad\qquad \omega \leq \omega \to \omega \qquad\qquad
 \sigma_1 \wedge \sigma_2 ~\leq~ \sigma_i\quad
~~~(i=1,2) \\[.5em]
\sigma_1 \leq \tau_1\text{ and }  \sigma_2\leq \tau_2~\text{imply} ~ \sigma_1\wedge\sigma_2 ~\leq~ \tau_1 \wedge \tau_2 \\[.5em]
\sigma_2 \leq \sigma_1 \text{ and }  \mu_1 \leq \mu_2 ~\text{imply} ~ \sigma_1 \to \mu_1 \leq  \sigma_2 \to  \mu_2
 \end{array} $$
%\caption{\normalsize Subtyping}\label{tam}
\end{definition}
\noindent
 We write $\sigma\sim\tau$ if $\sigma\leq\tau$ and $\sigma\geq\tau$.
 %Since $\sigma\to\omega\sim\omega$ for all $\sigma$, we always assume $\mu\not\sim\omega$ in writing $\sigma\to\mu$.

 A key property of this subtyping is the content of the following lemma, for a proof see~\cite{B11}.
\begin{lemma}\labelx{2.4(i)}
If $\sigma\to\mu\leq \tau \to \nu$, then $\tau\leq\sigma$ and $\mu\leq\nu$.
\end{lemma}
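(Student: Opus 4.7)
The lemma is an arrow-inversion (or ``$\beta$-soundness'') property for the subtyping relation. The obvious approach is induction on the derivation of $\sigma\to\mu\leq\tau\to\nu$, but the obstacle is that $\leq$ is defined as the least reflexive, transitive relation closed under the given generating rules, so nothing forces the last step of such a derivation to be the arrow rule: transitivity could route the derivation through an intermediate type, and in particular through $\omega$, exploiting the axiom $\omega\leq\omega\to\omega$. The plan is therefore to prove the lemma via an auxiliary characterisation of $\leq$ that makes the arrow inversion structurally transparent.

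The first step is to introduce an equivalent, ``cut-free'' subtyping relation $\leq^\ast$ defined by recursion on the right-hand side. In particular, $\tA\leq^\ast\omega$ always; $\tA\leq^\ast\tB_1\wedge\tB_2$ iff $\tA\leq^\ast\tB_i$ for $i=1,2$; and $\tA\leq^\ast\tB\to\tN$ iff $\tA$ has a finite collection of arrow ``components'' $\{\tB_i\to\tM_i\}$ (read off from the syntactic shape of $\tA$, together with the derivable arrows produced by the $\omega\leq\omega\to\omega$ axiom) such that $\tB\leq^\ast\tB_i$ for every $i$ and $\bigwedge_i\tM_i\leq^\ast\tN$. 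The soundness direction $\leq^\ast\subseteq\leq$ is by inspection of the clauses. The completeness direction $\leq\subseteq\leq^\ast$ amounts to checking that $\leq^\ast$ is reflexive, transitive, closed under the intersection and arrow rules, and validates both $\tA\leq^\ast\omega$ and $\omega\leq^\ast\omega\to\omega$; the non-routine point is transitivity of $\leq^\ast$, which is proved by a nested induction on the combined structure of the intermediate type.

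Once the equivalence $\leq\,=\,\leq^\ast$ is available, the lemma follows directly: apply the characterisation to $\sigma\to\mu\leq^\ast\tau\to\nu$. The only arrow component of the singleton $\sigma\to\mu$ on the left is $\sigma\to\mu$ itself, plus the component $\omega\to\omega$ made available by the $\omega$ axiom. If the component used to witness the subtyping is $\sigma\to\mu$, the definition of $\leq^\ast$ gives immediately $\tau\leq\sigma$ and $\mu\leq\nu$. If instead the witnessing component is $\omega\to\omega$, then $\tau\leq\omega$ and $\omega\leq\nu$, which forces $\nu\sim\omega$; in this degenerate case $\mu\leq\omega\sim\nu$ holds trivially, and one needs the further observation that $\tau\to\nu\sim\omega$ reduces the claim to a vacuous one, which is the step requiring most care. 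The principal technical obstacle throughout the plan is the interaction of the $\omega\leq\omega\to\omega$ axiom with the arrow rule and transitivity, and it is cleanly isolated in the transitivity step of the equivalence $\leq\,=\,\leq^\ast$.
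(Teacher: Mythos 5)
Your overall strategy is the standard one for such ``$\beta$-soundness'' lemmas: replace $\leq$ by an equivalent syntax-directed relation $\leq^{\ast}$, establish admissibility of transitivity for $\leq^{\ast}$ by induction on the intermediate type, and then read the arrow inversion off the clause for $\to$. Note, for calibration, that the paper does not prove this lemma at all; it defers to \cite{B11}, so there is no in-paper argument to compare against, and your plan is essentially the proof one finds in that literature.

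There is, however, a genuine gap, and it sits exactly where you locate ``the step requiring most care.'' When the witnessing arrow component is the one contributed by the axiom $\omega\leq\omega\to\omega$, you correctly derive $\nu\sim\omega$, but you then claim this ``reduces the claim to a vacuous one.'' It does the opposite: since then $\tau\to\nu\sim\omega$, the \emph{hypothesis} $\sigma\to\mu\leq\tau\to\nu$ becomes vacuously true, so the conclusion would have to hold for arbitrary $\sigma$ and $\tau$ --- and the half $\tau\leq\sigma$ does not. Concretely, $\varphi_1\to\varphi_1\;\leq\;\omega\;\leq\;\omega\to\omega\;\leq\;\varphi_2\to\omega$, yet $\varphi_2\not\leq\varphi_1$ (interpret $\varphi_1$ as $\emptyset$ and $\varphi_2$ as the whole domain; every generating rule of Definition~\ref{tam} is sound for set inclusion under the standard type interpretation). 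So the statement taken literally is false, your plan cannot close this case, and the fix is to carry the proviso $\nu\not\sim\omega$ (equivalently, to state the conclusion as ``either $\nu\sim\omega$, or $\tau\leq\sigma$ and $\mu\leq\nu$''), which is how the property appears in \cite{B11}. A secondary caution: your $\to$-clause for $\leq^{\ast}$, which selects a \emph{set} of arrow components and compares the intersection of their targets with the goal, is the characterisation appropriate to the standard system of Section~\ref{git}, where the distributivity axiom $(\sigma\to\tau)\wedge(\sigma\to\rho)\leq\sigma\to\tau\wedge\rho$ is present; in the strict system of Section~\ref{it} that clause would break soundness of $\leq^{\ast}\subseteq\leq$ unless you first show it collapses to the single-component form (which it does, because the target of an arrow is strict, but that collapse is itself a lemma you must prove).
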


The essential intersection type assignment system is defined by the typing rules of Table \ref{tr}. We assume that an environment associates intersections with a finite number of term variables. Let $\Gamma$ range over environments. The subsumption rule uses the preorder of Definition~\ref{tam}. We write $\B\vdash N:\sigma$ with $\sigma=\bigwedge_{i\in I} \mu_i$ as short for  \mbox{$\B\vdash N:\mu_i$} for all $i\in I$.

\begin{table}[h]
$$
\begin{array}{c}
(Ax)\quad  \B, x: \bigwedge_{i\in I} \mu_i \vdash x: \mu_j\quad j\in I \qquad\qquad(\omega) \quad \B \vdash M: \omega
\qquad\qquad (\leq) \quad  \db \frac{\B \vdash M: \mu \quad \mu \leq \nu}
{\B \vdash M:  \nu}
\\\\
(\to I) \quad  \db \frac{\B,x:\tA \vdash M: \mu}
{\B \vdash \lambda x.M: \tA \to \mu} \qquad\qquad
(\to E) \quad \db \frac{\B \vdash M: \tA \to \mu \quad
\B \vdash N: \tA}{\B \vdash MN: \mu}
\end{array}
$$
\caption{Typing Rules}\label{tr}
\end{table}

\smallskip

The inversion lemma is as expected, for a proof see~\cite{B11}.

\begin{lemma}[Inversion Lemma]\label{il} %Let $\mu\not\sim\omega$.
  ~
\begin{enumerate}
\item\label{il1} If $\Gamma\vdash x:\mu$, then either $\mu\sim\omega$ or $x:\sigma\in\Gamma$ and $\sigma\leq \mu$.
\item \label{il4} If $\Gamma\vdash \bot:\mu$, then $\mu\sim\omega$.
\item \label{il2} If $\Gamma\vdash MN:\mu$, then $\Gamma\vdash M:\sigma\to\mu$ and $\Gamma\vdash N:\sigma$.
\item \label{il3} If $\Gamma\vdash \lambda x.M:\mu$, then $\mu\sim\sigma\to\nu$ and $\Gamma,x:\sigma\vdash M:\nu$.
\end{enumerate}
\end{lemma}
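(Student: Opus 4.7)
The plan is to establish all four parts simultaneously by induction on the structure of the typing derivation, with a case analysis on the last rule applied. For each part, only the structural rule matching the shape of the term plus the two ``universal'' rules $(\omega)$ and $(\leq)$ can occur as the last step; the substantive work is in handling $(\leq)$.

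The structural cases are immediate. In Part~\ref{il1}, rule $(Ax)$ gives $x:\sigma \in \Gamma$ with $\sigma \leq \mu$; in Parts~\ref{il1} and~\ref{il4}, rule $(\omega)$ forces $\mu = \omega$; in Part~\ref{il2}, rule $(\to E)$ supplies the intermediate type $\sigma$ directly; in Part~\ref{il3}, rule $(\to I)$ gives $\mu$ literally as an arrow $\sigma \to \nu$ with $\Gamma, x:\sigma \vdash M : \nu$.

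The critical inductive case is $(\leq)$, where from $\Gamma \vdash t : \mu'$ and $\mu' \leq \mu$ I must transfer the inductive hypothesis from $\mu'$ to $\mu$. For Parts~\ref{il1} and~\ref{il4}, if the IH yields $\mu' \sim \omega$, then $\mu' \leq \mu \leq \omega$ gives $\mu \sim \omega$; the alternative IH branch for Part~\ref{il1} extends by transitivity ($\sigma \leq \mu' \leq \mu$). For Part~\ref{il2}, the IH produces $\Gamma \vdash M : \sigma \to \mu'$ and $\Gamma \vdash N : \sigma$; since $\mu' \leq \mu$ implies $\sigma \to \mu' \leq \sigma \to \mu$ by contravariance, one application of $(\leq)$ yields $\Gamma \vdash M : \sigma \to \mu$. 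The subtle case is Part~\ref{il3}: the IH gives $\mu' \sim \sigma' \to \nu'$ with $\Gamma, x:\sigma' \vdash M : \nu'$, and $\mu' \leq \mu$ forces $\sigma' \to \nu' \leq \mu$. Writing $\mu \sim \sigma \to \nu$ (legitimate because $\mu$ is either an arrow or satisfies $\mu \sim \omega \sim \omega \to \omega$), Lemma~\ref{2.4(i)} delivers $\sigma \leq \sigma'$ and $\nu' \leq \nu$, and a type-strengthening step on the environment followed by $(\leq)$ produces $\Gamma, x:\sigma \vdash M : \nu$ as required.

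The main obstacle I expect is exactly this environment-strengthening step in Part~\ref{il3}: showing that shrinking the declared type of a bound variable from $\sigma'$ to a smaller $\sigma$ preserves derivability. This should follow from a standard auxiliary induction on the derivation of $\Gamma, x:\sigma' \vdash M:\nu'$, whose only interesting case is $(Ax)$ applied to $x$: if the axiom selects a conjunct $\mu_j$ of $\sigma'$, then from $\sigma \leq \sigma' \leq \mu_j$ and transitivity of $\leq$, an application of $(Ax)$ on $x:\sigma$ followed by $(\leq)$ recovers the same atomic type for $x$ from the new environment.
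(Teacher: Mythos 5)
The paper does not actually prove this lemma in the text---it defers to van Bakel~\cite{B11}---so the comparison is with the standard argument, which is exactly the induction on typing derivations you propose. Your architecture (structural rules immediate, all the substance in the $(\leq)$ case, plus an auxiliary environment-strengthening lemma for part~\ref{il3}) is the right one, and parts~\ref{il1}, \ref{il4} and~\ref{il2} go through as you describe.

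Two steps, however, silently invoke structural properties of the preorder of Definition~\ref{tam} that are not among its defining clauses and need their own small inductions on $\leq$-derivations. First, in part~\ref{il3} you write $\mu\sim\sigma\to\nu$ ``because $\mu$ is either an arrow or satisfies $\mu\sim\omega$''; a priori $\mu$ could be a type variable, and excluding that requires showing that $\sigma'\to\nu'\leq\mu$ forces $\mu$ to be equivalent to an arrow type (no arrow is $\leq$ a type variable in this system). Second, in the $(Ax)$ case of your strengthening lemma you pass from $\sigma\leq\mu_j$ to a derivation of $\Gamma,x:\sigma\vdash x:\mu_j$ ``by $(Ax)$ followed by $(\leq)$''; but $(Ax)$ only yields a single conjunct $\nu_k$ of $\sigma=\bigwedge_k\nu_k$, and rule $(\leq)$ relates strict types only, so you need the fact that $\bigwedge_k\nu_k\leq\mu_j$ with $\mu_j$ strict implies either $\mu_j\sim\omega$ or $\nu_k\leq\mu_j$ for some single $k$. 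Both facts are true for this $\leq$ (which lacks the distributivity axiom added in Section~\ref{git}) and are standard, but they are precisely the content hiding behind ``inversion'' and must be established before your argument closes. A cosmetic point: the step $\sigma\to\mu'\leq\sigma\to\mu$ in part~\ref{il2} is covariance of the codomain, not contravariance.
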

In this system types are  preserved by $\beta\bot$-conversion~\cite{B11}:

\begin{theorem}[Subject Conversion]\label{sr}
If $\B\vdash M:\mu$ and $M=N$, then $\B\vdash N:\mu$.
\end{theorem}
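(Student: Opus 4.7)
The plan is to reduce the theorem to \emph{Subject Reduction} and \emph{Subject Expansion} for each atomic rule of the $\lambda\bot$-calculus, and then invoke the Church--Rosser property of $\beta\bot$-reduction. Since $M=N$ means that both $M$ and $N$ reduce to a common term $P$, it suffices to show that typing is preserved both along $\red$ and along its converse; then deriving $\B \vdash P : \mu$ from $\B\vdash M:\mu$ via reduction and $\B\vdash N:\mu$ from $\B\vdash P:\mu$ via expansion completes the proof.

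For Subject Reduction, I would handle the three root reductions using Lemma~\ref{il}. If $(\lambda x.M')N \red M'\set{N/x}$ and $\B\vdash (\lambda x.M')N:\mu$, parts~\ref{il2} and~\ref{il3} yield $\sigma$ with $\B,x:\sigma\vdash M':\mu$ and $\B\vdash N:\sigma$ (using Lemma~\ref{2.4(i)} to reconcile the arrow subtyping). A standard Substitution Lemma, proved by induction on the derivation of $M'$ and exploiting that intersections enter only through axioms, then gives $\B\vdash M'\set{N/x}:\mu$. For $\bot M\red\bot$, part~\ref{il2} forces $\B\vdash \bot:\sigma\to\mu$ for some $\sigma$, and part~\ref{il4} gives $\sigma\to\mu\sim\omega$, whence $\mu\sim\omega$ and $\B\vdash\bot:\mu$ by rule $(\omega)$ followed by $(\leq)$. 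The case $\lambda x.\bot\red\bot$ is similar, using parts~\ref{il3} and~\ref{il4} to conclude $\mu\sim\omega$. These root cases are then lifted to arbitrary one-step reduction by a straightforward induction on evaluation contexts, and closed under reflexive-transitive closure.

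For Subject Expansion, I would follow the usual intersection-type recipe. The delicate case is $\beta$: assuming $\B\vdash M'\set{N/x}:\mu$, I would prove an Inverse Substitution Lemma that collects, for each residual occurrence of $N$ in $M'\set{N/x}$, a type derived for that occurrence; taking $\sigma$ to be the intersection of these types (and $\sigma=\omega$ when $x$ does not occur in $M'$, using rule $(\omega)$) yields $\B,x:\sigma\vdash M':\mu$ and $\B\vdash N:\sigma$. Rules $(\to I)$ and $(\to E)$ then produce $\B\vdash (\lambda x.M')N:\mu$. Expansion for $\bot M\red\bot$ and $\lambda x.\bot\red\bot$ is immediate: from $\B\vdash\bot:\mu$ we get $\mu\sim\omega$, so $\bot M$ and $\lambda x.\bot$ receive $\omega$ via $(\omega)$ and $(\leq)$, and thus also $\mu$.

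The main obstacle is the Inverse Substitution Lemma in the essential setting: because the system has no explicit $(\wedge I)$ rule, intersections for $x$ can only be introduced at axioms, so the collected intersection $\sigma$ must be assembled carefully so that every axiom use of $x$ in the expanded derivation picks one of its conjuncts. Once this bookkeeping is in place—mirroring the construction of~\cite{B11}—the theorem follows from Church--Rosser in the standard way.
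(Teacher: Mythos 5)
Your proof is correct and follows the standard route: the paper itself gives no proof of this theorem, deferring to~\cite{B11}, where the argument is exactly the one you outline (subject reduction plus subject expansion, via a Substitution Lemma and an inverse-substitution construction that assembles the intersection for $x$ from the types of the residual occurrences of $N$). One small remark: since the paper defines $M=N$ as the existence of a common reduct $P$, the appeal to Church--Rosser is superfluous -- joinability is given by definition, and your two preservation directions already close the argument.
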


We say that {\em a term inhabits a type} if we can derive the type for the term starting  from the empty environment.  A term inhabits a set of types if it inhabits all the types belonging to the set. Let $\tau =\bigwedge_{i\in I} \mu_i$: we say that $\sigma\to\tau$ is inhabited if there exists a term which inhabits all the types $\sigma\to\mu_i$ for ${i\in I}$.

Inhabitation for intersection types has been shown undecidable in general~\cite{U99}, but decidable for types with {\em rank} less than or equal to 2~\cite{U09}, when the rank of  types and intersections is defined by:
\[\begin{array}{lll}
\rank(\mu)&=&\begin{cases}
   \max(\rank(\sigma)+1,\rank(\nu))   & \text{if }\mu=\sigma\to\nu\text{ and }\wedge\text{ occurs in }\mu, \\
   0   & \text{otherwise}
\end{cases}\\[20pt]
\rank(\sigma\wedge\tau)&=&\max(1,\rank(\sigma),\rank(\tau))
\end{array}\]

In the following we characterise the types of left/right invertible terms. These characterisations require inhabitation of some types and therefore they are effective only when these types are of rank at most 2.

\smallskip

We start defining inductively the set $\Theta$ of left types which mimic the set $\Xi$ of hnfs, that is the construction of $\Theta$ follows the construction of $\Xi$.  In the following definition $\sigma\to\sigma_1\to\ldots\to\sigma_n\to\tau\in\Theta$, where  $\tau=\bigwedge_{i\in I} \mu_i$, is used as short for $\sigma\to\sigma_1\to\ldots\to\sigma_n\to\mu_i\in\Theta$ for all $i\in I$.

\begin{definition}
The set $\Theta$ of {\em left types} is inductively defined by:
\begin{itemize}
\item if $\sigma\leq\nu$, then $\sigma\to\sigma_1\to\ldots\to\sigma_n\to\nu\in \Theta$;
\item if $\sigma\to\sigma_1\to\ldots\to\sigma_n\to\rho_i\in \Theta$ and $\sigma_j\leq\rho_1\to\ldots\to\rho_m\to\nu$ for some $j\leq n$ and $i\leq m$ and \mbox{$\sigma\to\sigma_1\to\ldots\to\sigma_n\to\rho_l$} is inhabited for $1\leq l\leq m$, then $\sigma\to\sigma_1\to\ldots\to\sigma_n\to\nu\in \Theta$.
\end{itemize}
\end{definition}

\needspace{1em}
\begin{example}\label{ex4}
  ~
\begin{itemize}
\item[i)] Let $\tau=\psi\wedge(\varphi\to\varphi')\wedge(\psi\to\varphi'\to\psi')$. We have $\varphi\to\tau\to\psi'\in\Theta$ since:
\begin{itemize} \item $\varphi\to\tau\to\varphi'\in\Theta$
\item $\tau \leq \psi\to\varphi'\to\psi'$ and
\item both $\varphi\to\tau\to\psi$ and $\varphi\to\tau\to\varphi'$ are inhabited. \end{itemize}
Moreover $\varphi\to\tau\to\varphi'\in\Theta$ since:
\begin{itemize} \item $\varphi\to\tau\to\varphi\in\Theta$
\item $\tau \leq \varphi\to\varphi'$ and
\item $\varphi\to\tau\to\varphi$ is inhabited. \end{itemize}
The type $\varphi\to\tau\to\psi'$ can be derived for the term $M$ of Example~\ref{ex1}.
\item[ii)]  Let $\mu=(\varphi\to\psi)\to\psi\to\psi'$ and $\nu=\varphi\to\psi$ . We have $\varphi\to\mu\to\nu\to\psi'\in\Theta$ since:
\begin{itemize} \item $\varphi\to\mu\to\nu\to\psi\in\Theta$
\item $\mu\leq \mu$ and
\item both $\varphi\to\mu\to\nu\to\nu$ and $\varphi\to\mu\to\nu\to\psi$ are inhabited. \end{itemize}
Moreover $\varphi\to\mu\to\nu\to\psi\in\Theta$ since:
 \begin{itemize} \item $\varphi\to\mu\to\nu\to\varphi\in\Theta$
\item $\nu\leq \nu$ and
\item $\varphi\to\mu\to\nu\to\varphi$ is inhabited. \end{itemize}
 The type $\varphi\to\mu\to\nu\to\psi'$ can be derived for the left invertible term $\lambda t x_1 x_2.x_1 x_2 (x_2 t)$.
\end{itemize}
\end{example}
\smallskip

We define the number of top arrows of a type as expected:
\[\flat(\varphi)=\flat(\omega)=0 \qquad \qquad \flat(\sigma\to\mu)=1+\flat(\mu)\]
It is useful to observe  that if a type with at least $n$ top arrows has an inhabitant, then this type has also an inhabitant with at least $n$ initial abstraction.

\begin{lemma}\label{taia}
If type $\mu$ is inhabited and $\flat(\mu)\geq n$, then there is $M$ with at least $n$ initial abstractions such that $\vdash M:\mu$.
\end{lemma}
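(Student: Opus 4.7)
The plan is a simple $\eta$-expansion argument. Since $\mu$ is a strict intersection type with $\flat(\mu)\geq n$, I can write $\mu=\sigma_1\to\cdots\to\sigma_k\to\nu$ with $k=\flat(\mu)\geq n$ and $\nu$ atomic (a type variable or $\omega$). Let $N$ be any term with $\vdash N:\mu$, and choose fresh variables $x_1,\ldots,x_n$. Define
\[
M\;=\;\lambda x_1\ldots x_n.\,N\,x_1\ldots x_n,
\]
which has exactly $n$ initial abstractions. It then suffices to build a derivation of $\vdash M:\mu$.

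The derivation is produced in three steps. First, by weakening (admissible by a routine induction on the typing rules of Table~\ref{tr}), from $\vdash N:\mu$ I obtain $x_1\!:\!\sigma_1,\ldots,x_n\!:\!\sigma_n\vdash N:\sigma_1\to\cdots\to\sigma_k\to\nu$. Next, for each $i\leq n$ I use rule $(Ax)$ to derive $x_1\!:\!\sigma_1,\ldots,x_n\!:\!\sigma_n\vdash x_i:\sigma_i$; when $\sigma_i$ is an intersection $\bigwedge_{j\in J}\mu_j$, this is done by a separate application of $(Ax)$ for each conjunct $\mu_j$, following the shorthand convention introduced just before Table~\ref{tr}. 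Applying $(\to E)$ exactly $n$ times then yields $x_1\!:\!\sigma_1,\ldots,x_n\!:\!\sigma_n\vdash N\,x_1\ldots x_n:\sigma_{n+1}\to\cdots\to\sigma_k\to\nu$. Finally, $n$ applications of $(\to I)$ discharge the assumptions and give $\vdash M:\mu$, as required.

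I do not expect any real obstacle: the argument goes through uniformly whether $\nu$ is a type variable or $\omega$, and the edge case $n=0$ is trivial since $N$ itself already works. The only point worth double-checking is the handling of the intersection-valued $\sigma_i$ in the step that produces $x_i:\sigma_i$, which is immediate from the stated abbreviation.
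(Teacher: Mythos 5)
Your proof is correct, and the typing derivation you sketch (weakening, then $(Ax)$ conjunct by conjunct via the stated abbreviation, then $n$ uses of $(\to E)$ and $n$ uses of $(\to I)$) goes through without difficulty. Both you and the paper invoke $\eta$-expansion, but you execute it differently. The paper first splits off the case $\mu\sim\omega$, then appeals to the fact (cited from the literature) that every inhabitant of a type not equivalent to $\omega$ has a head normal form $\lambda x_1\ldots x_{n'}.x_jM_1\ldots M_m$, and $\eta$-expands \emph{inside} that hnf, producing $\lambda x_1\ldots x_{n'}y_1\ldots y_{n-n'}.x_jM_1\ldots M_my_1\ldots y_{n-n'}$ --- a term which is again a head normal form. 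You instead wrap an arbitrary inhabitant $N$ from the outside as $\lambda x_1\ldots x_n.Nx_1\ldots x_n$. Your route is more elementary and uniform: it needs no head-normalization result and no case split on $\mu\sim\omega$. What it gives up is that your witness is in general not a head normal form (if $N$ is an abstraction, the body $Nx_1\ldots x_n$ contains a redex), whereas the paper's witness is; the lemma as stated does not ask for this, so your proof establishes exactly the claimed statement, but the hnf shape is convenient for the way the lemma is later deployed in the proof of Lemma~\ref{ctlit}, where the inhabitants $\lambda tx_1\ldots x_n.M_l$ are fed into the inductive construction of $\Xi$.
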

\begin{proof} If $\mu\sim \omega$ it is trivial. Otherwise the inhabitants of $\mu$ must have hnfs, see~\cite{B11} for a proof. Let $\mu=\sigma_1\to \ldots\to\sigma_{n}\to\nu$ and $\lambda x_1\ldots x_{n'}.x_jM_1\ldots M_m$ be an inhabitant of $\mu$ with
 $n'< n$. It is easy to check that we get $M$ by $\eta$-expansion (see~\cite[Definition 3.3.1]{B84}) \[\vdash \lambda x_1\ldots x_{n'}y_1\ldots y_{n-n'}.x_jM_1\ldots M_my_1\ldots y_{n-n'}: \sigma_1\to \ldots\to\sigma_{n}\to\nu\qedhere
 \]
\end{proof}

\begin{lemma}[Characterisation of Types for Left Invertible Terms]\label{ctlit}~
\begin{enumerate}
\item\label{ctlit1} Left invertible terms inhabit only types which are
left types.
\item\label{ctlit2} Each left type is inhabited by a left invertible term.
\end{enumerate}
\end{lemma}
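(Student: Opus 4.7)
The plan is to prove the two parts as duals, exploiting the close mirroring between the inductive definitions of $\Xi$ and $\Theta$. For part~\ref{ctlit1}, I would first reduce to head normal form: if $M$ is left invertible and $\vdash M:\mu$, then by Theorem~\ref{left} $M$ reduces to some hnf $N\in\Xi$, and Subject Conversion (Theorem~\ref{sr}) yields $\vdash N:\mu$; the proof then proceeds by induction on the membership of $N$ in $\Xi$. For part~\ref{ctlit2}, I would induct on the definition of $\Theta$, at each step constructing a term in $\Xi$ of the required type, so that left invertibility comes for free from Theorem~\ref{left}.

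For part~\ref{ctlit1}, the base case $N=\lambda tx_1\ldots x_n.t$ is handled by peeling abstractions with Lemma~\ref{il}(\ref{il3}) and then applying Lemma~\ref{il}(\ref{il1}) to the occurrence of $t$: this yields $\mu\sim\sigma\to\sigma_1\to\ldots\to\sigma_n\to\nu$ with $\sigma\leq\nu$, placing $\mu$ in $\Theta$ via the first clause. In the inductive case $N=\lambda tx_1\ldots x_n.x_jM_1\ldots M_m$ with $\lambda tx_1\ldots x_n.M_i\in\Xi$, Lemma~\ref{il}(\ref{il3}) peels the abstractions, and Lemma~\ref{il}(\ref{il2}) extracts from the head application a typing of $x_j$ at $\rho_1\to\ldots\to\rho_m\to\nu$ (forcing $\sigma_j\leq\rho_1\to\ldots\to\rho_m\to\nu$ by Lemma~\ref{il}(\ref{il1})) together with typings of each $M_l$ at $\rho_l$; these witness inhabitation of $\sigma\to\sigma_1\to\ldots\to\sigma_n\to\rho_l$ via the term $\lambda tx_1\ldots x_n.M_l$. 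Applying the induction hypothesis to $\lambda tx_1\ldots x_n.M_i$ at type $\sigma\to\sigma_1\to\ldots\to\sigma_n\to\rho_i$ places this type in $\Theta$, and the second clause of the definition then gives $\mu\in\Theta$.

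For part~\ref{ctlit2}, the base case is inhabited by $\lambda tx_1\ldots x_n.t\in\Xi$, using subsumption to coerce $t$ from $\sigma$ to $\nu$. In the inductive case the IH applied to $\sigma\to\sigma_1\to\ldots\to\sigma_n\to\rho_i\in\Theta$ furnishes some $\lambda tx_1\ldots x_n.M_i\in\Xi$ of this type; for each $l\neq i$ I would take an inhabitant of $\sigma\to\sigma_1\to\ldots\to\sigma_n\to\rho_l$ and invoke Lemma~\ref{taia} to assume it has the form $\lambda tx_1\ldots x_n.M_l$. Then $\lambda tx_1\ldots x_n.x_jM_1\ldots M_m$ lies in $\Xi$ by the very clause used to place $\mu$ in $\Theta$, and inhabits $\mu$: the head $x_j$ receives type $\rho_1\to\ldots\to\rho_m\to\nu$ by subsumption from $\sigma_j$, each $M_l$ has type $\rho_l$ in the extended context, so the body has type $\nu$, and Theorem~\ref{left} gives left invertibility. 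The main obstacle is precisely this alignment of initial abstractions: all components must sit under common binders $\lambda tx_1\ldots x_n$ so that the hnf schema of $\Xi$ applies and the context delivers the types $\sigma_1,\ldots,\sigma_n$ to the bound variables; Lemma~\ref{taia} is exactly designed to handle this via $\eta$-expansion for the generic inhabitants supplied by the inductive clause, while the IH naturally produces $M_i$ with the right shape.
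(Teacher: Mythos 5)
Your proposal follows essentially the same route as the paper's proof: part~\ref{ctlit1} by induction on membership in $\Xi$ after passing to the hnf via Theorem~\ref{left} and Subject Conversion, using Lemma~\ref{il} exactly as the paper does, and part~\ref{ctlit2} by induction on $\Theta$, using Lemma~\ref{taia} to align the initial abstractions of the component inhabitants before assembling the hnf $\lambda tx_1\ldots x_n.x_jM_1\ldots M_m\in\Xi$. The argument is correct and matches the paper's in structure and in the lemmas invoked.
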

\needspace{2em}
\begin{proof}~
  \begin{enumerate}
    \item 
 % (\ref{ctlit1}). 
      By Definition~\ref{leftinv} a left invertible term $M$ is a $\lambda$-abstraction,  then its type is of the shape $\sigma\to\nu$ by Lemma~\ref{il}(\ref{il3}).
      It is enough to show that  if the head normal form of $M$ belongs to $\Xi$, then $\sigma\to\nu\in \Theta$. The proof is by induction on $\Xi$.

      Case $\lambda tx_1\ldots x_n.t$. By the invariance of types under $\beta\bot$-conversion (Theorem~\ref{sr}) we get \[\vdash\lambda tx_1\ldots x_n.t:\sigma\to\nu\]which implies by repeated application of Lemma~\ref{il}(\ref{il3}) $\nu\sim\sigma_1\to\ldots\to\sigma_n\to\mu$ and \[t:\sigma,x_1:\sigma_1, \ldots, x_n:\sigma_n\vdash t:\mu\]
      Then $\sigma\leq\mu$ by Lemma~\ref{il}(\ref{il1})  and we can conclude $\sigma\to\nu\in \Theta$.

      Case $\lambda tx_1\ldots x_n.x_j  M_1\ldots M_m \in \Xi $ with $j\leq n$  since  $\lambda tx_1\ldots x_n.  M_ i \in \Xi $ with $i\leq m$. As in previous case we get $\nu\sim\sigma_1\to\ldots\to\sigma_n\to\mu$ and
      \[t:\sigma,x_1:\sigma_1, \ldots, x_n:\sigma_n\vdash x_j  M_1\ldots M_m:\mu\] Let $\B=t:\sigma,x_1:\sigma_1, \ldots, x_n:\sigma_n$.
      By repeated application of Lemma~\ref{il}(\ref{il2}) we have \begin{center}$\B\vdash x_j:\rho_1\to\ldots\rho_m\to\mu$ and $\B\vdash M_l:\rho_l$ for $1\leq l\leq m$.\end{center} Lemma~\ref{il}(\ref{il1}) implies $\sigma_j\leq\rho_1\to\ldots\rho_m\to\mu$. Moreover\begin{center}$\lambda tx_1\ldots x_n.M_l$ inhabits $\sigma\to\sigma_1\to\ldots\to\sigma_n\to\rho_l$ for $1\leq l\leq m$.\end{center} Lastly $\lambda tx_1\ldots x_n.  M_ i \in \Xi $ implies by induction \[\sigma\to\sigma_1\to\ldots\to\sigma_n\to\rho_i\in\Theta\] We can then conclude $\sigma\to\nu\in \Theta$.

    \item 
 %(\ref{ctlit2}). 
      The proof is by induction on $\Theta$. If $\sigma\leq \nu$ we can derive \mbox{$\vdash\lambda tx_1\ldots x_n.t: \sigma\to \sigma_1\to \ldots\to\sigma_n\to\nu$.} Otherwise by Lemma~\ref{taia} we can assume that the inhabitants of $\sigma\to\sigma_1\to \ldots\to\sigma_n\to\rho_l$ for $1\leq l\leq m$ have at least $n+1$ initial abstractions. Let \begin{center}$\lambda tx_1\ldots x_n.M_l$ be an inhabitant of $\sigma\to\sigma_1\to \ldots\to\sigma_n\to\rho_l$ for $1\leq l\leq m$ and $\sigma\to\sigma_1\to \ldots\to\sigma_n\to\rho_i\in\Theta$ for some $i\leq m$.\end{center} By induction $\lambda tx_1\ldots x_n.M_i\in\Xi$, then $\lambda tx_1\ldots x_n.x_j M_1\ldots M_i\ldots M_m\in\Xi$. Moreover \begin{center}if $\sigma_j\leq \rho_1\to\ldots\to\rho_m\to\nu$,  then $\lambda tx_1\ldots x_n.x_j M_1\ldots M_i\ldots M_m$ inhabits $\sigma\to\sigma_1\to \ldots\to\sigma_n\to\nu$.\end{center}
      \qedhere
  \end{enumerate}
\end{proof}

The types of right invertible terms are easy to define, as expected.

\begin{definition}\label{rt}
  A type $\tau \to \mu$ is a {\em right type} if $\tau\leq \rho_1\to\ldots\to\rho_m\to\mu$ and $\tau\to \rho_i$ is inhabited for $1\leq i\leq m$.
\end{definition}

\begin{example}\label{ex5} A right type is $((\psi_1\to\psi_1)\wedge(\psi_2\to\psi_2)\to\omega\to\varphi)\wedge\psi\to\varphi$.  Another right type is
  \[((\varphi_1\to\varphi_2\to(\varphi_1\to\varphi_2\to\varphi_3)\to\varphi_3)\to(\psi_1\to\psi_2\to\psi_2)\to\omega\to(\psi_1\to\psi_2\to\psi_1)\to\varphi)\to\varphi\]
  This last type can be derived for
  the term $L\circ N$ of Example~\ref{ex2}.
\end{example}

\begin{lemma}[Characterisation of Types for Right Invertible Terms]\label{ctrit}~
  \begin{enumerate}
    \item\label{ctrit1} Right invertible terms inhabit only types which are %intersections of
      right types.
    \item \label{ctrit2} Each right type is inhabited by a right invertible term.
  \end{enumerate}
\end{lemma}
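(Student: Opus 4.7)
The plan is to mirror the proof of Lemma~\ref{ctlit}, with considerable simplification, because Theorem~\ref{right} fixes the head normal form of any right invertible term to be $\lambda z.zM_1\ldots M_m$: no induction on a deeper structure such as $\Xi$ is required.

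For Part 1, let $M$ be right invertible with $\vdash M:\nu$. Theorem~\ref{right} together with subject conversion (Theorem~\ref{sr}) lets us replace $M$ by its hnf $\lambda z.zM_1\ldots M_m$ without changing the type. If $\nu\sim\omega$ we are done, since $\omega\sim\omega\to\omega$ satisfies Definition~\ref{rt} with $m=0$. Otherwise, Lemma~\ref{il}(\ref{il3}) gives $\nu\sim\tau\to\nu'$ with $z:\tau\vdash zM_1\ldots M_m:\nu'$, and $m$ iterated applications of Lemma~\ref{il}(\ref{il2}) yield $\rho_1,\ldots,\rho_m$ such that $z:\tau\vdash z:\rho_1\to\ldots\to\rho_m\to\nu'$ and $z:\tau\vdash M_i:\rho_i$ for each $i$. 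Lemma~\ref{il}(\ref{il1}) turns the first of these into $\tau\leq\rho_1\to\ldots\to\rho_m\to\nu'$, while $(\to I)$ applied to each $z:\tau\vdash M_i:\rho_i$ exhibits $\lambda z.M_i$ as an inhabitant of $\tau\to\rho_i$. These are precisely the conditions of Definition~\ref{rt}, so $\nu$ is a right type.

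For Part 2, let $\tau\to\mu$ be a right type, with $\tau\leq\rho_1\to\ldots\to\rho_m\to\mu$ and inhabitants $N_i$ of $\tau\to\rho_i$. Lemma~\ref{taia} (applied with $n=1$) lets us assume each $N_i$ begins with at least one abstraction; after $\alpha$-conversion to a common bound variable $z$ and the Inversion Lemma we may write $N_i=\lambda z.M_i$ with $z:\tau\vdash M_i:\rho_i$. The candidate inhabitant is $T=\lambda z.zM_1\ldots M_m$, which by Theorem~\ref{right} is right invertible, admitting the simple right inverse $\lambda ty_1\ldots y_m.t$. To type $T$ at $\tau\to\mu$, derive $z:\tau\vdash z:\rho_1\to\ldots\to\rho_m\to\mu$ from the subtyping hypothesis, combine it with the $m$ judgements $z:\tau\vdash M_i:\rho_i$ by repeated $(\to E)$ to get $z:\tau\vdash zM_1\ldots M_m:\mu$, and close with $(\to I)$.

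The only non-routine step is the derivation of $z:\tau\vdash z:\rho_1\to\ldots\to\rho_m\to\mu$ in Part 2, because rule $(\leq)$ operates only within strict types and $(Ax)$ merely extracts a single strict component of $\tau$. One must therefore appeal to the standard derivable property of the essential system that, whenever $\sigma\leq\nu$ with $\nu$ strict and $\nu\not\sim\omega$, some strict component of $\sigma$ is already $\leq\nu$; then $(Ax)$ followed by $(\leq)$ suffices. Once this is granted, the whole statement is a direct unwinding of the Inversion Lemma and Lemma~\ref{taia}, and there is no further combinatorial difficulty comparable to the $\Xi$-induction used for the left invertible case.
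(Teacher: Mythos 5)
Your proof is correct and follows essentially the same route as the paper's: inversion of the typing of the hnf $\lambda z.zM_1\ldots M_m$ for Part 1, and direct construction of such a hnf from the inhabitants of $\tau\to\rho_i$ for Part 2. The extra care you take --- stripping an initial abstraction from each inhabitant via Lemma~\ref{taia} rather than applying it to $z$, and flagging the subtyping decomposition needed to derive $z:\tau\vdash z:\rho_1\to\ldots\to\rho_m\to\mu$ in the essential system --- is sound and merely makes explicit steps the paper leaves implicit.
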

\begin{proof}~
  \begin{enumerate}
    \item 
  %(\ref{ctrit1}). 
      By Theorem~\ref{right} is it enough to show that $\vdash \lambda z.zM_1\ldots  M_m:\tau \to \mu$ implies that $\tau \to \mu$ is a right type. By Lemma~\ref{il}(\ref{il3}) and (\ref{il2}) we get:
      $z:\tau\vdash z:\rho_1\to\ldots\to\rho_m\to\mu$ and $z:\tau\vdash M_i:\rho_i$ for $1\leq i\leq m$. By Lemma~\ref{il}(\ref{il1}) $\tau\leq \rho_1\to\ldots\to\rho_m\to\mu$. Moreover $\lambda z.zM_i$ inhabits $\tau\to\rho_i$ for $1\leq i\leq m$. Therefore $\tau \to \mu$ is a right type.

    \item 
  %(\ref{ctrit2}). 
      Let $M_i$ be an inhabitant of $\tau\to \rho_i$ for $1\leq i\leq m$ and $\tau\leq \rho_1\to\ldots\to\rho_m\to\mu$. Then we can derive $\vdash \lambda z.zM_1\ldots  M_m:\tau \to \mu$.
      \qedhere
  \end{enumerate}
\end{proof}

It is easy to verify that $\omega$ is both a left and a right type. 
 %B
\section{Characterisation of Retraction in Strict Intersection Types}\label{ec}
We can discuss now retractions, i.e. isomorphic embeddings, in strict types using terms of $\Lambda\bot$.

\begin{definition} \label{defcoer}
 Type $\mu$ is a \emph{retract} of type $\nu$   (notation $\mu \lhd \nu$) if there exist terms $L$ and $R$ such that:
\begin{enumerate}
\item \label{defcoer11} $\vdash L: \nu \to \mu$;
\item \label{defcoer12} $\vdash R: \mu \to \nu$;
\item \label{defcoer13} $L \circ R = {\bf I }$.
\end{enumerate}
We say that $L,R$ witness the retraction.
\end{definition}

\begin{example}\label{ex6}
$L=\lambda z. z{\bf I}(\lambda y.yy)z$ and $R=\lambda t x_1x_2x_3.x_2x_1t$ witness the retraction\[\varphi\lhd (\varphi\to\varphi)\wedge((\varphi\to\varphi)\to\varphi\to\varphi)\to\sigma\to\omega\to\varphi\] where $\sigma=(\varphi\to\varphi)\wedge((\varphi\to\varphi)\to\varphi\to\varphi)\to\varphi\to\varphi$. The same retraction is witnessed by $L$ and $R'=\lambda t x_1x_2x_3.t$, which is a simple right inverse. Notice that $L$ cannot be typed with Curry types.
\end{example}

It easy to prove that the retraction relation enjoys the transitivity property. In fact if $L,R$ witness  $\mu \lhd \mu'$ and $L',R'$ witness  $\mu' \lhd \nu$, then $L\circ L',R'\circ R$  witness  $\mu \lhd \nu$.

\medskip

Retraction can be fully characterised.
\begin{theorem}[Characterisation of  Retraction]\label{cec}
$\mu\lhd \nu$  if and only if
$\nu \sim \rho_1 \to  \ldots\to\rho_m \to \mu$ and $\nu\to\rho_i$ is inhabited for $1\leq i\leq m$. Moreover, each retraction can be witnessed by a simple right inverse.

 \end{theorem}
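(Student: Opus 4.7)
For the \emph{if} direction, I build the witnesses explicitly. Take $R:=\lambda t x_1\ldots x_m.t$: by $(Ax)$ and repeated $(\to I)$ this has type $\mu\to\rho_1\to\ldots\to\rho_m\to\mu$, and subsumption (using $\rho_1\to\ldots\to\rho_m\to\mu\leq\nu$ from $\sim$) yields $\vdash R:\mu\to\nu$. For each $i$, let $N_i$ be a closed inhabitant of $\nu\to\rho_i$ and set $L:=\lambda z.z(N_1 z)\ldots(N_m z)$. Under $z:\nu$, subsumption promotes $z$ to $\rho_1\to\ldots\to\rho_m\to\mu$, each $N_i z$ has type $\rho_i$, so $\vdash L:\nu\to\mu$. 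A direct $\beta$-reduction gives $L(Rw)\to(\lambda x_1\ldots x_m.w)(N_1(Rw))\ldots(N_m(Rw))\to w$, because the bound $x_i$ do not occur in the body $w$; hence $L\circ R={\bf I}$. Since $R$ is already a simple right inverse, the \emph{moreover} clause is satisfied in this direction.

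For the \emph{only if} direction, assume $\mu\lhd\nu$ is witnessed by $L,R$. Because $L$ has $R$ as right inverse, Theorem~\ref{right} forces the head normal form of $L$ to be $\lambda z.z M_1\ldots M_m$. By subject conversion (Theorem~\ref{sr}) and iterated application of the Inversion Lemma (items~\ref{il3},~\ref{il2},~\ref{il1}) on $\vdash L:\nu\to\mu$, I extract types $\rho_1,\ldots,\rho_m$ with $z:\nu\vdash z:\rho_1\to\ldots\to\rho_m\to\mu$, whence $\nu\leq\rho_1\to\ldots\to\rho_m\to\mu$, and $z:\nu\vdash M_i:\rho_i$, so $\lambda z.M_i$ witnesses that $\nu\to\rho_i$ is inhabited. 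This yields the inhabitation clause together with one direction of the required $\sim$.

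For the reverse inequality $\rho_1\to\ldots\to\rho_m\to\mu\leq\nu$ and simultaneously the \emph{moreover} clause, I replace $R$ by the simple right inverse $R':=\lambda t x_1\ldots x_m.t$, whose arity $m$ is inherited from $L$'s hnf; a direct $\beta$-computation mirroring the one in the \emph{if} direction confirms $L\circ R'={\bf I}$. The remaining step, deriving $\vdash R':\mu\to\nu$, is equivalent by Inversion to exhibiting a decomposition $\nu\sim\sigma_1\to\ldots\to\sigma_m\to\mu'$ with $\mu\leq\mu'$. I plan to extract such a decomposition from $R$ itself: since $R$ is left invertible, Lemma~\ref{ctlit}(\ref{ctlit1}) places $\mu\to\nu$ in $\Theta$, and the outermost shape imposed by $\Theta$ is exactly such an arrow decomposition of $\nu$. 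Feeding this back into the upper bound $\nu\leq\rho_1\to\ldots\to\rho_m\to\mu$ via Lemma~\ref{2.4(i)} aligns the two decompositions, forcing $\mu'\sim\mu$ and $\sigma_i\sim\rho_i$, and hence delivers both the missing inequality and the typing of $R'$.

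The main obstacle is this final alignment: $L$ supplies an upper bound on $\nu$ while $R$ supplies a structural shape, but a priori neither the arrow lengths nor the intermediate types need coincide. Length-matching will come from subject conversion applied to $L\circ R={\bf I}$ (an arity mismatch between the simple $R'$ and the hnf of $L$ would leave stray abstractions or residual arguments in the reduction to $\mathbf{I}$), while componentwise alignment will follow from iterated use of Lemma~\ref{2.4(i)} to compare the two decompositions of $\nu$; some care will be needed for the degenerate cases $\mu\sim\omega$ or $\nu\sim\omega$, handled separately.
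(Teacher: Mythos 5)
Your \emph{if} direction coincides with the paper's construction and is fine. The \emph{only if} direction, however, has a genuine gap exactly where you yourself locate the ``main obstacle'': the reverse inequality $\rho_1\to\ldots\to\rho_m\to\mu\leq\nu$, equivalently, writing $\nu=\rho_1\to\ldots\to\rho_m\to\mu'$, the inequality $\mu\leq\mu'$. The route you propose --- extracting an arrow decomposition of $\nu$ ending in some $\mu'\geq\mu$ from the fact that $R$ is left invertible, hence $\mu\to\nu\in\Theta$ by Lemma~\ref{ctlit}(\ref{ctlit1}) --- does not work. Membership in $\Theta$ forces the first argument type to dominate the tail only in the base case of the definition; in the inductive case no such relation holds. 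Example~\ref{ex4}\,\emph{i)} is a direct counterexample to your claim: $\varphi\to\tau\to\psi'\in\Theta$, yet $\varphi\not\leq\psi'$. More fundamentally, $\Theta$-membership holds for \emph{every} type of \emph{every} left invertible term, including type pairs that witness no retraction at all (the paper notes after the theorem that the terms of Examples~\ref{ex2} and~\ref{ex1} with the types of Examples~\ref{ex5} and~\ref{ex4}\,\emph{i)} do not witness a retraction), so no retraction-specific information can be extracted from it. The needed information lives in the interaction of the typings of $L$ and $R$ through the equation $L\circ R={\bf I}$, and your closing paragraph only gestures at this (``length-matching will come from subject conversion'', ``componentwise alignment will follow from iterated use of Lemma~\ref{2.4(i)}'') without carrying it out.

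The paper closes the gap as follows. From $\nu\leq\sigma_1\to\ldots\to\sigma_m\to\mu$ (your upper bound coming from $L$) one writes $\nu=\rho_1\to\ldots\to\rho_m\to\mu'$, and iterated use of Lemma~\ref{2.4(i)} gives $\sigma_i\leq\rho_i$ and $\mu'\leq\mu$. Then one observes $L\circ R=\lambda x.RxM_1'\ldots M_m'$ with $M_i'=M_i\set{Rx/z}$; from $\vdash R:\mu\to\nu$ and $z:\nu\vdash M_i:\sigma_i\leq\rho_i$ one derives $x:\mu\vdash RxM_1'\ldots M_m':\mu'$. Since $L\circ R={\bf I}$ forces $RxM_1'\ldots M_m'=x$, Subject Conversion (Theorem~\ref{sr}) yields $x:\mu\vdash x:\mu'$, hence $\mu\leq\mu'$ by Lemma~\ref{il}(\ref{il1}), and so $\mu\sim\mu'$. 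This is the step your proposal needs and does not supply; note also that the $\rho_i$ of the statement are the components of this decomposition of $\nu$, not the types $\sigma_i$ you read off from typing $z$ (inhabitation of $\nu\to\rho_i$ then follows from $\sigma_i\leq\rho_i$). Once $\nu\sim\rho_1\to\ldots\to\rho_m\to\mu$ is established, the \emph{moreover} clause is immediate from the \emph{if} construction, so your separate manoeuvre with $R'$ is unnecessary.
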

\begin{proof}~
  \begin{description}
    \item[(If)]
      Let $M_i$ be an inhabitant of $\nu\to\rho_i$ for $1\leq i\leq m$. We can choose $L=\lambda z.z(M_1z)\ldots (M_mz)$ and $R=\lambda tx_1\ldots x_m.t$. Notice that $R$ is a simple right inverse. It it easy to verify that $L$ and $R$ satisfy the conditions of Definition~\ref{defcoer}.
    \item[(Only if)]
      By Theorem~\ref{right} $L=\lambda z.zM_1 \ldots  M_m$. Then applying Lemma~\ref{il}(\ref{il3}) to $\vdash L: \nu \to \mu$ we get \mbox{$ z: \nu \vdash zM_1 \ldots  M_m : \mu$.} By repeated applications of Lemma~\ref{il}(\ref{il2}) this implies\\  \centerline{$ z: \nu \vdash z:\sigma_1 \to  \ldots\to\sigma_m \to \mu$  and $ z: \nu \vdash M_i : \sigma_i$ for $1\leq i \leq m$.} From $ z: \nu \vdash z:\sigma_1 \to  \ldots\to\sigma_m \to \mu$ we have
      $\nu \leq \sigma_1 \to  \ldots\to\sigma_m \to \mu$ by Lemma~\ref{il}(\ref{il1}).
      We can assume $\nu = \rho_1 \to  \ldots\to\rho_m \to \mu' $, which implies $\sigma_i \leq \rho_i$ for $1 \leq i \leq  m$ and $\mu' \leq\mu$ by Lemma~\ref{2.4(i)}. Observe that $L \circ R=\l x. RxM'_1 \ldots M'_m$  where $M_i'=M_i\set {Rx/z}$ for $1\leq i\leq m$. From $\vdash R: \mu \to \nu$ and $ z: \nu \vdash M_i : \sigma_i$ and $\sigma_i \leq \rho_i$ we can derive
      $ x: \mu \vdash M_i' : \rho_i$ for $1\leq i\leq m$. This together with $\nu = \rho_1 \to  \ldots\to\rho_m \to \mu' $ implies $ x: \mu \vdash RxM'_1 \ldots M'_m :\mu'$. From $L \circ R = {\bf I }$ we get $RxM'_1 \ldots M'_m=x$. Subject Conversion derives $ x: \mu \vdash x:\mu'$, so by Lemma~\ref{il}(\ref{il1}) $\mu\leq\mu'$. We conclude $\mu\sim\mu'$.
      \qedhere
  \end{description}
\end{proof}

As an easy consequence of this theorem  if $\mu \lhd \nu$, then $\mu \lhd \rho \to \nu$ for any intersection $\rho$ such that $\nu\to\rho$ is inhabited. Moreover if $\mu\not\sim\omega$, then neither $\mu \lhd \omega$ nor $\omega \lhd \mu$ can hold.

\medskip

Reciprocal retraction implies equivalence.
\begin{corollary}
  If $\mu\lhd \nu$ and $\nu\lhd \mu$, then $\mu\sim \nu$.
\end{corollary}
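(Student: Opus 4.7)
The plan is to combine the characterisation of Theorem~\ref{cec} with a rank-invariance property of $\sim$ on strict types. Applying Theorem~\ref{cec} to $\mu \lhd \nu$ and $\nu \lhd \mu$ yields intersections $\rho_1,\ldots,\rho_m$ and $\tau_1,\ldots,\tau_n$ such that $\nu \sim \rho_1 \to \ldots \to \rho_m \to \mu$ and $\mu \sim \tau_1 \to \ldots \to \tau_n \to \nu$. Substituting the second equivalence into the first, using that $\sim$ is a congruence with respect to $\to$, gives
\[
\nu \;\sim\; \rho_1 \to \ldots \to \rho_m \to \tau_1 \to \ldots \to \tau_n \to \nu.
\]

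I would first dispose of the degenerate case $\nu \sim \omega$: using the easy fact that $\sigma \to \omega \sim \omega$ (which follows from the axiom $\omega \leq \omega \to \omega$ and arrow contravariance), the equivalence for $\mu$ collapses to $\mu \sim \tau_1 \to \ldots \to \tau_n \to \omega \sim \omega$, whence $\mu \sim \nu$. The case $\mu \sim \omega$ is symmetric.

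In the main case both $\mu,\nu \not\sim \omega$, and I would establish the following rank-invariance lemma by induction on $\flat(\alpha)$: for strict types $\alpha,\beta$ with $\alpha \sim \beta$ and $\alpha \not\sim \omega$, one has $\flat(\alpha) = \flat(\beta)$. The base case $\alpha = \varphi$ requires the auxiliary observation that a strict type $\tau \leq \varphi$ must equal $\varphi$, read off the inductive definition of $\leq$ by checking that no closure rule can produce $\tau \leq \varphi$ for $\tau$ an arrow or $\omega$. For the inductive step $\alpha = \sigma \to \mu'$: by the same observation in reverse, $\beta$ cannot be a variable, and since $\beta \not\sim \omega$ we must have $\beta = \sigma' \to \mu''$; Lemma~\ref{2.4(i)} applied in both directions to $\sigma \to \mu' \sim \sigma' \to \mu''$ forces $\mu' \sim \mu''$, and the induction hypothesis gives $\flat(\mu') = \flat(\mu'')$.

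Applying rank invariance to the displayed equivalence yields $\flat(\nu) = m + n + \flat(\nu)$, whence $m = n = 0$, and so $\nu \sim \mu$. The main technical obstacle lies in the auxiliary claim that strict types below a variable must equal that variable: although intuitively clear, it is not stated in the excerpt and must be derived carefully from Definition~\ref{tam}, since transitivity of $\leq$ is in play and an induction on the derivation is needed to rule out chains passing through arrows or $\omega$.
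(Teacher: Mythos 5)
Your proof follows the same route as the paper's: apply Theorem~\ref{cec} in both directions to obtain $\nu \sim \rho_1 \to \cdots \to \rho_m \to \mu$ and $\mu \sim \tau_1 \to \cdots \to \tau_n \to \nu$, and conclude $m=n=0$. The paper simply asserts that these two equivalences force $m=n=0$; your treatment of the $\nu\sim\omega$ case and the arrow-depth ($\flat$) invariance of $\sim$ on non-$\omega$ strict types supply precisely the justification it leaves implicit, and both steps are correct.
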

\begin{proof} By previous theorem $\nu \sim \rho_1 \to  \ldots\to\rho_m \to \mu$ and $\mu \sim \sigma_1 \to  \ldots\to\sigma_n \to \nu$, which imply $m=n=0$ and then $\mu\sim \nu$.\end{proof}

\medskip
Given two terms $L$ and $R$ such that $L \circ R = {\bf I }$, they do not  witness a retraction for all the pairs of their types.
For instance,  the terms $L \circ N$, $M$ in Examples \ref{ex2} and \ref{ex1}
do not witness a retraction using their types shown in Examples \ref{ex5} and \ref{ex4}\,{\it i)}.
%~\ref{ex4}\;{\it i)} for the term $M$ and %is not a retract of the type
%the type in Example~\ref{ex5} for the term $L \circ N$ do not build a retraction.
%Recall that $L \circ N\circ M = {\bf I }$.
However  $L \circ N$, $M$ witness the  following retraction:
\[\mu~~\lhd~~(\mu\to\nu_1)\wedge(\omega\to\nu_1\to\nu_2)\to\nu_2\]
where $\nu_1 = \omega\to(\mu\to\omega\to\mu)\to\mu$ and $\nu_2 = (\omega\to\nu_1\to\nu_1)\to\nu_1$.

\medskip

Notice that $L\circ R={\bf I}$ implies that for all types $\mu$ there is an intersection $\sigma$ such that $L$ has type $\sigma\to\mu$ and $R$ has type $\mu\to\sigma$. The proof is easy using the inversion lemma and the invariance of types under $\beta$-expansion. This does not mean that $L,R$ always witness a retraction between strict types, since $\sigma$ can be an intersection of strict types, as shown in the following example.
\begin{example}\label{ines}Let $L = \lambda u.u\bot(u\bot(\lambda z.\textbf{I}))$ and $R = \lambda t x_1 x_2. x_2 t$. We derive \[\vdash L: (\omega\to(\varphi\to\varphi)\to\varphi)\wedge(\omega\to(\varphi\to\varphi\to\varphi)\to\varphi\to\varphi)\to\varphi\]
  \[\vdash R: \varphi\to\omega\to(\varphi\to\varphi)\to\varphi\text{ and }\vdash R: \varphi\to\omega\to(\varphi\to\varphi\to\varphi)\to\varphi\to\varphi\]
\end{example}

%\medskip

\noindent
When $R$ is a simple right inverse instead we always get a set of retractions.
\begin{theorem}
  If $L\circ R={\bf I}$ and $R$ is a simple right inverse, then for each type $\mu$ we find types $\nu$ such that $L,R$ witness the retraction $\mu\lhd\nu$.
\end{theorem}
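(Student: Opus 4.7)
The plan is to exhibit a suitable $\nu$ explicitly and verify the two typing conditions of Definition~\ref{defcoer}. Since $R$ is a simple right inverse, it has the form $R=\lambda tx_1\ldots x_n.t$ for some $n\geq 0$. For the given strict type $\mu$, I take
\[
\nu\;=\;\underbrace{\omega\to\cdots\to\omega}_{n}\to\mu,
\]
which is a strict type. Condition \ref{defcoer12} of Definition~\ref{defcoer} is immediate: $n+1$ applications of $(\to I)$ on the axiom $t:\mu,x_1:\omega,\ldots,x_n:\omega\vdash t:\mu$ give $\vdash R:\mu\to\nu$. Condition \ref{defcoer13} holds by hypothesis. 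So the real content is the derivation $\vdash L:\nu\to\mu$.

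To get this, I start from the axiom $y:\mu\vdash y:\mu$. Since $L\circ R=\mathbf{I}$ and $Ry$ reduces to $\lambda x_1\ldots x_n.y$, we have $y=\mathbf{I}y=(L\circ R)y=L(\lambda x_1\ldots x_n.y)$ up to $\beta\bot$-conversion, so Subject Conversion (Theorem~\ref{sr}) yields
\[
y:\mu\;\vdash\;L(\lambda x_1\ldots x_n.y)\,:\,\mu.
\]
Applying Lemma~\ref{il}(\ref{il2}), there is an intersection $\sigma$ with $y:\mu\vdash L:\sigma\to\mu$ and $y:\mu\vdash \lambda x_1\ldots x_n.y:\sigma$. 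Writing $\sigma=\bigwedge_{i\in I}\mu_i$, the latter unfolds componentwise to $y:\mu\vdash \lambda x_1\ldots x_n.y:\mu_i$ for each $i$.

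The key step is to show $\nu\leq\sigma$. For each $i\in I$, I apply Lemma~\ref{il}(\ref{il3}) $n$ times to extract $\mu_i\sim\sigma_{1,i}\to\cdots\to\sigma_{n,i}\to\tau_i$ together with $y:\mu,x_1:\sigma_{1,i},\ldots,x_n:\sigma_{n,i}\vdash y:\tau_i$, from which Lemma~\ref{il}(\ref{il1}) gives $\mu\leq\tau_i$. Combined with $\sigma_{j,i}\leq\omega$, co/contravariance of $\leq$ at arrows yields $\nu\leq \sigma_{1,i}\to\cdots\to\sigma_{n,i}\to\tau_i\sim\mu_i$. Hence $\nu\leq\mu_i$ for every $i$, so $\nu\leq\sigma$. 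Since $L$ is closed, $\vdash L:\sigma\to\mu$; by contravariance $\sigma\to\mu\leq\nu\to\mu$, and subsumption gives $\vdash L:\nu\to\mu$, completing the verification that $L,R$ witness $\mu\lhd\nu$.

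The main obstacle is the analysis of $\sigma$, which is an intersection of possibly many strict types; one must argue uniformly on each component $\mu_i$. A minor technicality is the case in which some $\mu_i$ (or some intermediate $\tau$) is equivalent to $\omega$: the inversion still applies through $\omega\sim\omega\to\omega$, and the desired inequality $\nu\leq\mu_i$ reduces to $\nu\leq\omega$, which is trivial. Everything else is a routine use of the Inversion Lemma and the subtyping rules of Definition~\ref{tam}.
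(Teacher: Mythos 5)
Your proof is correct, but it takes a genuinely different route from the paper's. The paper first invokes Theorem~\ref{right} to pin down the head normal form $L=\lambda z.zM_1\ldots M_m$ (the arity $m$ necessarily matching that of $R$), and then types $L$ at $\nu\to\mu$ directly by assigning each $M_i$ the type $\omega$ -- more generally any $\rho_i$ with $\vdash\lambda z.M_i:\nu\to\rho_i$, which yields a whole family of witnesses $\nu\sim\rho_1\to\cdots\to\rho_m\to\mu$. You never inspect the shape of $L$ at all: you extract a type for $L$ from the conversion $L(\lambda x_1\ldots x_n.y)=y$ via Subject Conversion and the Inversion Lemma, and then push the candidate $\nu=\omega\to\cdots\to\omega\to\mu$ below the intersection $\sigma$ obtained by inversion, concluding by contravariance and subsumption. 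Both arguments land on the same canonical witness; the paper's is shorter and exhibits the larger family of $\nu$'s, while yours is more robust in that it does not depend on the syntactic characterisation of right-invertible terms, only on subject conversion and inversion. Two small points to make explicit: the step from $y:\mu\vdash L:\sigma\to\mu$ to $\vdash L:\sigma\to\mu$ uses a strengthening (context-thinning) property for the essential system that the paper does not state, though it is standard and holds since $y$ is fresh; and the iterated use of Lemma~\ref{il}(\ref{il3}) needs the observation that $\nu_1\sim\sigma_2\to\nu_2$ lifts to $\sigma_1\to\nu_1\sim\sigma_1\to\sigma_2\to\nu_2$, which you use implicitly but which follows at once from Definition~\ref{tam}. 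Neither affects correctness.
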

\begin{proof} By Theorem~\ref{right} $L=\lambda z.zM_1 \ldots  M_m$. By definition of simple right inverse $R=\lambda tx_1\ldots x_m.t$. We can then choose $\nu\sim\rho_1\to\ldots\to\rho_m\to\mu$ for each $\rho_1,\ldots,\rho_m$ such that $\vdash \lambda z.M_i:\nu\to\rho_i$ for $1\leq i\leq m$. In particular we can always take $\rho_i=\omega$ for $1\leq i\leq m$.
\end{proof}

\medskip

 %C
\section{Characterisation of Simple Retraction in Standard Intersection Types}\label{git}
In this section we extend the characterisation of retraction (Theorem~\ref{cec}), which holds for strict types, to the case of standard intersection types, by considering only simple right inverses.

The set of standard intersection types (simply {\em intersection types}) is defined by:
\[
  \sigma := \varphi~\mid~\omega~\mid~\sigma\to\sigma~\mid~\sigma\wedge\sigma
\]
where, as before, $\varphi$  ranges over type variables and $\omega$ is a constant.
In this section, we convene that $\sigma , \tau, \rho$   range over intersection types.

The preorder of Definition~\ref{tam} is extended to intersection types by adding the rule
\[(\sigma\to\tau)\wedge(\sigma\to\rho)\leq\sigma\to\tau\wedge\rho\]
The following lemma gives a crucial property of this subtyping, which is shown in~\cite{barecoppdeza83}.
\begin{lemma}\labelx{2.4(ii)}
If $\bigwedge_{i\in I} (\sigma_i\to\tau_i)\leq \sigma\to\tau$, then there is $J\subseteq I$ such that $\sigma\leq \bigwedge_{i\in J}\sigma_i$ and $\bigwedge_{i\in J}\tau_i\leq\tau$.
\end{lemma}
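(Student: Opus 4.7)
The plan is to prove the lemma by induction on the generation of the preorder $\leq$, as defined in Definition~\ref{tam} and extended by the new axiom $(\sigma\to\tau)\wedge(\sigma\to\rho)\leq\sigma\to\tau\wedge\rho$. For reflexivity, $I$ is a singleton $\{i_0\}$ with $\sigma=\sigma_{i_0}$ and $\tau=\tau_{i_0}$, so taking $J=I$ works. For the $\to$-monotonicity rule the left-hand side is again a single arrow and $J=I$ is immediate. For the new axiom, $I=\{1,2\}$ and the common antecedent on the left coincides with $\sigma$, and again $J=I$ discharges the claim. The axioms $\sigma\leq\omega$, $\omega\leq\omega\to\omega$, and $\sigma_1\wedge\sigma_2\leq\sigma_i$ either do not have the required shape on the left-hand side or reduce to the degenerate situation $\tau\sim\omega$, which I treat uniformly by letting $J=\emptyset$ with the convention that the empty intersection is $\omega$.

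The main obstacle is the transitivity case: one has $\bigwedge_{i\in I}(\sigma_i\to\tau_i)\leq\rho\leq\sigma\to\tau$ with an \emph{arbitrary} intermediate $\rho$ to which neither hypothesis of the inductive statement applies directly. My plan is to first establish an auxiliary normal form lemma, namely that every intersection type is equivalent to an intersection of strict components (type variables, $\omega$, and arrows). Combined with Lemma~\ref{2.4(i)} and the equivalence $\omega\sim\omega\to\omega$, the right inequality $\rho\leq\sigma\to\tau$ forces the arrow components of $\rho$ to carry all the relevant subtyping information, so I may replace $\rho$ by an intersection $\bigwedge_{k\in K}(\sigma'_k\to\tau'_k)$ without loss. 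I then invoke the inductive hypothesis twice: componentwise on the left inequality to produce, for each $k\in K$, a subset $J_k\subseteq I$ with $\sigma'_k\leq\bigwedge_{i\in J_k}\sigma_i$ and $\bigwedge_{i\in J_k}\tau_i\leq\tau'_k$; and on the right inequality to produce $K'\subseteq K$ with $\sigma\leq\bigwedge_{k\in K'}\sigma'_k$ and $\bigwedge_{k\in K'}\tau'_k\leq\tau$. Setting $J=\bigcup_{k\in K'}J_k$ and chaining the obtained inequalities through monotonicity of $\wedge$ yields $\sigma\leq\bigwedge_{i\in J}\sigma_i$ and $\bigwedge_{i\in J}\tau_i\leq\tau$, as required.

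The technical subtlety, and the point where I expect to spend the most effort, is arranging the induction so that both nested invocations of the hypothesis are on strictly smaller objects; the obvious measure (derivation length) is not preserved by the normal-form rewriting, so I would use a multiset ordering on (size of $\rho$, derivation length). As a cleaner alternative to pursue in parallel, I would consider replacing $\leq$ by an equivalent syntax-directed system in the style of~\cite{barecoppdeza83}, in which transitivity is admissible rather than primitive and the only rule deriving an arrow-to-arrow inequality is the one whose shape matches the statement of the lemma; there the claim becomes a one-step rule inversion, at the cost of a separate admissibility proof.
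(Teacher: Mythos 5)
First, note that the paper does not prove this lemma at all: it is imported from~\cite{barecoppdeza83}, so your attempt has to be measured against the standard proofs in the literature rather than against anything in the text. Your base cases are essentially fine (the projection axiom $\sigma_1\wedge\sigma_2\leq\sigma_i$ with $\sigma_i$ an arrow is handled by a singleton $J$ rather than by $J=\emptyset$, and the $\wedge$-monotonicity rule needs a word in case idempotence identifies $\tau_1\wedge\tau_2$ with a single arrow, but both are routine). The genuine gap is exactly where you locate it, and it is not closed by what you propose. In the transitivity case $\bigwedge_{i\in I}(\sigma_i\to\tau_i)\leq\rho\leq\sigma\to\tau$ you (a) replace $\rho$ by an intersection of arrows and (b) apply the inductive hypothesis to the two resulting inequalities. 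Step (a) is itself partially circular: knowing that the atom and $\omega$ components of the normal form of $\rho$ can be discarded from $\rho\leq\sigma\to\tau$ when $\tau\not\sim\omega$ is an instance of the very inversion property you are proving. Step (b) is worse: the judgments $\bigwedge_{i\in I}(\sigma_i\to\tau_i)\leq\sigma'_k\to\tau'_k$ are manufactured by composing the sub-derivation of the left inequality with the normalisation $\rho\sim\bigwedge_k(\sigma'_k\to\tau'_k)$ and a projection, all glued by further uses of transitivity, so their derivations are not sub-derivations and their lengths are unrelated to the original. The measure you suggest, a multiset on (size of $\rho$, derivation length), does not repair this: in the recursive calls the new intermediate types arising inside those manufactured derivations are arbitrary and need not be smaller than $\rho$, and the derivation-length component has already been lost. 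As it stands, the induction is not well founded.

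The ``cleaner alternative'' you mention in your last sentence is not an alternative but the actual proof: one sets up an equivalent presentation of $\leq$ in which transitivity is admissible rather than primitive (a cut-elimination argument, or equivalently a characterisation of $\leq$ on normalised types by a single syntax-directed rule whose shape is precisely the statement of the lemma), proves the two presentations coincide, and then the lemma is one-step rule inversion. I would commit to that route from the outset; the admissibility-of-transitivity lemma absorbs exactly the difficulty that defeats the direct induction, and it is where all the real work lives. Alternatively, cite~\cite{barecoppdeza83} as the paper does and do not reprove it.
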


\medskip

The type assignment system is defined by the typing rules of Table~\ref{trgit}, where environments are finite mappings from term variables to intersection types.

\begin{table}[h]
  $$
  \begin{array}{c}
    (Ax)\quad  \B, x: \sigma\vdash x:\sigma \qquad\qquad(\omega) \quad \B \vdash M: \omega
    \qquad\qquad (\leq) \quad  \db \frac{\B \vdash M: \sigma \quad \sigma \leq \tau}
    {\B \vdash M:  \tau}
    \\\\
    (\to I) \quad  \db \frac{\B,x:\tA \vdash M: \tau}
    {\B \vdash \lambda x.M: \tA \to \tau}
    \qquad
    (\to E) \quad \db \frac{\B \vdash M: \tA \to \tau \quad
    \B \vdash N: \tA}{\B \vdash MN: \tau}
    \qquad
    (\wedge I) \quad \db \frac{\B\vdash M:\tA\quad\B\vdash M:\tau}
    {\B\vdash M:\tA\wedge\tau}
  \end{array}
  $$
  \caption{Typing Rules of Standard Intersection Types}\label{trgit}
\end{table}

\medskip

The inversion lemma is as expected, for a proof see~\cite[Theorem 12.1.13]{BDS13}.
\begin{lemma}[Inversion Lemma]\labelx{ils}~
  \begin{enumerate}
    \item\labelx{ils1} If $\Gamma\vdash x:\tau$, then either $\tau\sim\omega$ or $x:\sigma\in \Gamma$ for $\sigma\leq\tau$.
    \item \label{ils4} If $\Gamma\vdash \bot:\tau$, then $\tau\sim\omega$.
    \item\labelx{ils2} If $\Gamma\vdash MN:\tau$, then $\Gamma\vdash M:\sigma\to\tau$ and $\Gamma\vdash N:\sigma$.
    \item\labelx{ils3} If $\Gamma\vdash\lambda x.M:\tau$, then $\tau\sim\bigwedge_{i\in I}(\sigma_i\to\rho_i)$ and $\Gamma,x:\sigma_i\vdash M:\rho_i$ for all $i\in I$.
  \end{enumerate}
\end{lemma}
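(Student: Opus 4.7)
The plan is to prove each of the four clauses by induction on the derivation of the typing judgment. For each clause we identify the rules that can directly produce a typing of the relevant syntactic form, and then treat the ``indirect'' rules $(\omega)$, $(\leq)$ and $(\wedge I)$ uniformly — these are the rules that can be applied to any term and so must appear in every inductive argument.

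For clauses 1 and 2 the argument is essentially bookkeeping. In clause 1, the base case is rule $(Ax)$, which gives $x:\sigma \in \Gamma$ with $\sigma \leq \tau$ (using reflexivity); rule $(\omega)$ delivers $\tau \sim \omega$ directly; under $(\leq)$ we transitively compose the subtyping in the second disjunct (or preserve $\tau\sim\omega$ in the first); under $(\wedge I)$ we combine the two inductive conclusions, using that $\omega \wedge \omega \sim \omega$ and that intersecting on the right preserves $\sigma \leq \tau_1 \wedge \tau_2$. For clause 2, since $\bot$ is neither an abstraction nor an application nor a variable, only $(\omega)$, $(\leq)$ and $(\wedge I)$ can appear in the derivation, and a straightforward induction shows that $\tau \sim \omega$ in every case.

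Clause 3 is handled similarly, with $(\to E)$ as the direct rule giving $\sigma$ outright; under $(\leq)$ we chain with the induction hypothesis; under $(\wedge I)$ the induction hypothesis supplies intersections $\sigma_1,\sigma_2$ with $\Gamma \vdash M : \sigma_k \to \tau_k$ and $\Gamma \vdash N : \sigma_k$ for $k=1,2$, and we take $\sigma = \sigma_1 \wedge \sigma_2$, invoking subsumption to lift the premises and then rule $(\wedge I)$ on the conclusions (using the subtyping axiom $(\sigma \to \tau_1) \wedge (\sigma \to \tau_2) \leq \sigma \to \tau_1 \wedge \tau_2$ built into the preorder).

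The main obstacle is clause 4, and the key ingredient is Lemma~\ref{2.4(ii)}. The direct rule $(\to I)$ and rule $(\omega)$ are immediate (taking $|I|=1$ or $|I|=0$ with the usual convention). Under $(\wedge I)$ the two inductive decompositions into arrows combine into a larger indexed family. The delicate step is $(\leq)$: from the induction hypothesis $\tau' \sim \bigwedge_{j\in J}(\tau_j \to \rho_j)$ with $\Gamma, x:\tau_j \vdash M : \rho_j$, together with $\tau' \leq \tau$, I need to produce a decomposition of $\tau$ itself. The right normal form $\tau \sim \bigwedge_{i\in I}(\sigma_i \to \rho_i)$ follows from standard properties of the preorder (every non-$\omega$ type is equivalent to an intersection of arrows). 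For each $i\in I$, applying Lemma~\ref{2.4(ii)} to $\bigwedge_{j\in J}(\tau_j\to\rho_j) \leq \sigma_i \to \rho_i$ yields a subset $J_i\subseteq J$ with $\sigma_i \leq \bigwedge_{j\in J_i}\tau_j$ and $\bigwedge_{j\in J_i}\rho_j \leq \rho_i$; then weakening each $\Gamma, x:\tau_j \vdash M:\rho_j$ via subsumption to $\Gamma, x:\sigma_i \vdash M:\rho_j$, combining over $j \in J_i$ by $(\wedge I)$, and applying subsumption again gives $\Gamma, x:\sigma_i \vdash M:\rho_i$, as required.
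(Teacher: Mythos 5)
The paper itself does not prove this lemma; it defers to~\cite[Theorem 12.1.13]{BDS13}, so your argument has to stand on its own. Your overall strategy is the standard one and is sound in outline: induct on the derivation, treat $(\omega)$, $(\leq)$ and $(\wedge I)$ as the uniform indirect cases, and let Lemma~\ref{2.4(ii)} do the real work in the $(\leq)$ case of clause 4. Clauses 1--3 go through essentially as you describe (though for clause 3 you should still write out the $(\omega)$ case, which is not quite nothing: it needs the axiom $\omega\leq\omega\to\omega$ to produce $\Gamma\vdash M:\omega\to\omega$ and $\Gamma\vdash N:\omega$).

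There is, however, a genuine gap in the $(\leq)$ case of clause 4. You justify the existence of the decomposition $\tau\sim\bigwedge_{i\in I}(\sigma_i\to\rho_i)$ by asserting that ``every non-$\omega$ type is equivalent to an intersection of arrows.'' That is false: a type variable $\varphi$, or a type such as $\varphi\wedge(\sigma\to\rho)$, is not equivalent to any intersection of arrows. What you actually need is the implication: if $\tau'\leq\tau$ and $\tau'$ is equivalent to an intersection of arrows (your inductive invariant), then so is $\tau$ --- equivalently, that no intersection of arrow types is $\leq$ a type variable in the preorder of Definition~\ref{tam} extended with $(\sigma\to\tau)\wedge(\sigma\to\rho)\leq\sigma\to\tau\wedge\rho$. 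This fact is true but non-trivial; it is of the same nature as Lemma~\ref{2.4(ii)} itself and requires its own analysis of $\leq$ (via normal forms or a semantic interpretation). It is precisely the content that rules out, say, $\vdash\lambda x.M:\varphi$; as written, your justification would ``work'' even in a preorder where arrows could be coerced below atoms, so it cannot be carrying the weight. A second, smaller issue: the passage from $\Gamma,x:\tau_j\vdash M:\rho_j$ and $\sigma_i\leq\tau_j$ to $\Gamma,x:\sigma_i\vdash M:\rho_j$ is not an instance of rule $(\leq)$, which only rewrites the type of the subject, not the environment; it is an admissible environment-strengthening property that must be stated and proved separately (by an easy induction on derivations) before it can be invoked here.
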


\medskip

The notion of retraction (Definition~\ref{defcoer}) can be extended to  standard intersection types.   It is useful here to add the notion of simple retraction.

\begin{definition} \label{defr}~
\begin{enumerate} \item Type $\sigma$ is a \emph{retract} of type $\tau$  (notation $\sigma \lhd \tau$) if there exist terms $L$ and $R$ such that:
\begin{enumerate}
\item \label{defr11} $\vdash L: \tau \to \sigma$;
\item \label{defr12} $\vdash R: \sigma \to \tau$;
\item \label{defr13} $L \circ R = {\bf I }$.
\end{enumerate}
\item
 Type $\sigma$ is a \emph{simple retract} of type $\tau$ (notation $\sigma \lhd_s \tau$) if $\sigma \lhd \tau$ and $R$ is a simple right inverse.
\end{enumerate}
 We say that $L,R$ witness the (simple) retraction.
\end{definition}

\medskip

  %Remark that any intersection type can be expressed by an equivalent intersection of strict types.
We can now show the desired characterisation.
\begin{theorem}
  [Characterisation of Simple Retractions in Standard Intersection Types]\labelx{c2}
  $
  \sigma\triangleleft_s\tau
  $
  if and only if $\sigma\sim\bigwedge_{i\in I}\sigma_i$ and
 $
  \tau\sim\bigwedge_{i\in I}(\rho_1^{(i)}\to\dots\to\rho_m^{(i)}\to\sigma_i)
  $
  and the types $\tau\to\bigwedge_{i\in I}\rho_k^{(i)}$ are inhabited for $1\leq k\leq m$.
  %Moreover $\mu_i \lhd_s \nu_i$ for all $i\in I$.
\end{theorem}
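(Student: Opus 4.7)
My plan is to mirror Theorem~\ref{cec}, adapting the construction to the richer subtyping of standard intersection types.  The new feature is that $\sigma$ may split nontrivially as $\sigma\sim\bigwedge_{i\in I}\sigma_i$, so the simple right inverse $R$ must serve as first projection simultaneously for each $\sigma_i$ while the left inverse $L$ reassembles the parts via $(\wedge I)$ and the distributive subtyping rule.

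For the if direction I would construct the witnesses explicitly.  Given inhabitants $N_k$ of $\tau\to\bigwedge_{i\in I}\rho_k^{(i)}$ for $1\leq k\leq m$, take $R=\lambda tx_1\ldots x_m.t$ and $L=\lambda z.z(N_1z)\ldots(N_mz)$.  To derive $\vdash R:\sigma\to\tau$, I would first derive $\vdash R:\sigma\to\rho_1^{(i)}\to\ldots\to\rho_m^{(i)}\to\sigma_i$ for each $i\in I$ by assigning $t:\sigma$ and using $\sigma\leq\sigma_i$, then combine via $(\wedge I)$ and the distribution $(\sigma\to\alpha)\wedge(\sigma\to\beta)\leq\sigma\to\alpha\wedge\beta$.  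To derive $\vdash L:\tau\to\sigma$, I would assume $z:\tau$ so that $z$ inherits each arrow $\rho_1^{(i)}\to\ldots\to\sigma_i$ by subsumption, obtain $N_kz:\rho_k^{(i)}$ for every $i$ from $N_kz:\bigwedge_{i\in I}\rho_k^{(i)}$, hence $z(N_1z)\ldots(N_mz):\sigma_i$, and combine via $(\wedge I)$.  The identity $L\circ R={\bf I}$ reduces to the computation $L(Rx)=(Rx)(N_1(Rx))\ldots(N_m(Rx))=x$, since $Rx=\lambda x_1\ldots x_m.x$ is the constant-$x$ function in $m$ arguments.

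For the only if direction, let $L,R$ witness the simple retraction with $R=\lambda tx_1\ldots x_m.t$.  By Theorem~\ref{right}, $L$ has head normal form $\lambda z.zM_1\ldots M_{m'}$, and comparing head normal forms in $L(Rx)=x$ forces $m'=m$.  From $\vdash L:\tau\to\sigma$ I would choose a decomposition $\sigma\sim\bigwedge_{i\in I}\sigma_i$ and, applying Lemma~\ref{ils}(\ref{ils3}) and~(\ref{ils2}), obtain for each $i$ a derivation $z:\tau\vdash z:\rho_1^{(i)}\to\ldots\to\rho_m^{(i)}\to\sigma_i$ together with $z:\tau\vdash M_k:\rho_k^{(i)}$ for appropriate $\rho_k^{(i)}$.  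Lemma~\ref{ils}(\ref{ils1}) then gives $\tau\leq\rho_1^{(i)}\to\ldots\to\rho_m^{(i)}\to\sigma_i$, hence $\tau\leq\bigwedge_{i\in I}(\rho_1^{(i)}\to\ldots\to\rho_m^{(i)}\to\sigma_i)$, and $\lambda z.M_k$ inhabits $\tau\to\bigwedge_{i\in I}\rho_k^{(i)}$ via $(\wedge I)$.

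The main obstacle is the reverse inequality $\bigwedge_{i\in I}(\rho_1^{(i)}\to\ldots\to\rho_m^{(i)}\to\sigma_i)\leq\tau$ together with the matching $\bigwedge_{i\in I}\sigma_i\leq\sigma$.  For these I would exploit $\vdash R:\sigma\to\tau$: any type derivable for the simple right inverse $R$ is $\sim$-equivalent to an intersection of basic arrows $\alpha\to\gamma_1\to\ldots\to\gamma_m\to\alpha'$ with $\alpha\leq\alpha'$, as shown by induction on the derivation via the Inversion Lemma.  This yields a companion decomposition of $\tau$, which combined with Subject Conversion applied to $L\circ R={\bf I}$ and Lemma~\ref{2.4(ii)} lets me identify the two decompositions.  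The delicate step is aligning the index families so that the $L$-side and $R$-side yield the same $\sigma_i$ and $\rho_k^{(i)}$ up to $\sim$, closing the equivalences $\sigma\sim\bigwedge_{i\in I}\sigma_i$ and $\tau\sim\bigwedge_{i\in I}(\rho_1^{(i)}\to\ldots\to\rho_m^{(i)}\to\sigma_i)$.
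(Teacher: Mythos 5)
Your ``if'' direction is correct and coincides with the paper's: the same witnesses $L=\lambda z.z(N_1z)\ldots(N_mz)$ and $R=\lambda tx_1\ldots x_m.t$, with the typings obtained exactly as you describe via $(\wedge I)$ and the distribution rule $(\sigma\to\alpha)\wedge(\sigma\to\beta)\leq\sigma\to\alpha\wedge\beta$.

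The ``only if'' direction, however, has a genuine gap, and you have named it yourself: the reverse inequality $\bigwedge_{i\in I}(\rho_1^{(i)}\to\ldots\to\rho_m^{(i)}\to\sigma_i)\leq\tau$ and the identification of the two decompositions are precisely the content of the theorem, and your plan for them (``a companion decomposition of $\tau$'', ``aligning the index families'') is a description of the difficulty, not a proof. The difficulty is created by your starting point: you fix an arbitrary decomposition $\sigma\sim\bigwedge_{i\in I}\sigma_i$ and extract the $\rho_k^{(i)}$ from the typing of $L$, which only yields the upper bounds $\tau\leq\rho_1^{(i)}\to\ldots\to\rho_m^{(i)}\to\sigma_i$; nothing in that data forces the intersection of these arrows to lie below $\tau$, nor guarantees that the family you chose for $\sigma$ matches the one implicit in $R$. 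The paper removes the alignment problem entirely by extracting everything from $R$ first: applying Lemma~\ref{ils}(\ref{ils3}) to $\vdash\lambda tx_1\ldots x_m.t:\sigma\to\tau$ yields in one stroke the index set $I$, the equivalence $\tau\sim\bigwedge_{i\in I}(\rho_1^{(i)}\to\ldots\to\rho_m^{(i)}\to\sigma_i)$ (so the problematic half is free), and $t:\sigma\vdash t:\sigma_i$, hence $\sigma\leq\bigwedge_{i\in I}\sigma_i$. The typing of $L=\lambda z.zM_1\ldots M_m$ then contributes a \emph{single} arrow $\tau\leq\rho_1\to\ldots\to\rho_m\to\sigma$ together with $z:\tau\vdash M_k:\rho_k$; Lemma~\ref{2.4(ii)} applied to $\bigwedge_{i\in I}(\rho_1^{(i)}\to\ldots\to\rho_m^{(i)}\to\sigma_i)\leq\rho_1\to\ldots\to\rho_m\to\sigma$ gives $\bigwedge_{i\in J}\sigma_i\leq\sigma$ and $\rho_k\leq\bigwedge_{i\in J}\rho_k^{(i)}$ for some $J\subseteq I$, which combined with $\sigma\leq\bigwedge_{i\in I}\sigma_i$ closes $\sigma\sim\bigwedge_{i\in I}\sigma_i$ and shows that $\lambda z.M_k$ inhabits $\tau\to\bigwedge_{i}\rho_k^{(i)}$. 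Note also that Subject Conversion on $L\circ R={\bf I}$, which you invoke, is not needed here (unlike in Theorem~\ref{cec}): once you have both the $R$-side decomposition of $\tau$ and the $L$-side arrow above $\tau$, Lemma~\ref{2.4(ii)} alone supplies the missing inequality. You should restructure your ``only if'' argument along these lines.
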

\begin{proof}~
  \begin{description}
    \item[(If)]
      Let $M_k$ be an inhabitant of $\tau\to\bigwedge_{i\in I}\rho_k^{(i)}$ for $1\leq k\leq m$.
      We can choose $L=\lambda z.z(M_1z)\dots(M_mz)$ and $R=\lambda tx_1\dots x_m.t$. It is easy to verify that $L$ and $R$ satisfy the conditions of Definition~\ref{defr}.
    \item[(Only if)] 
      Since $R$ is simple, we can assume $R=\lambda tx_1\dots x_m.t$. Lemma~\ref{ils}(\ref{ils3}) applied to $\vdash R: \sigma \to \tau$ gives $
      \tau\sim\bigwedge_{i\in I}(\rho_1^{(i)}\to\dots\to\rho_m^{(i)}\to\sigma_i)$ and $ t:\sigma\vdash t:\sigma_i$ for all $i\in I$. Then $\sigma\leq \sigma_i$ for all $i\in I$, which implies
      $\sigma\leq \bigwedge_{i\in I}\sigma_i$.

      By Theorem~\ref{right} we have $L=\lambda z.zM_1\dots M_m$.
      Then applying the Inversion Lemma to $\vdash L:\tau\to\sigma$ we get $\tau\leq \rho_1\to\dots\to\rho_m\to\sigma$  and $ z:\tau\vdash M_k:\rho_k$ for $1\leq k\leq m$. Then
      \[\bigwedge_{i\in I}(\rho_1^{(i)}\to\dots\to\rho_m^{(i)}\to\sigma_i)\leq \rho_1\to\dots\to\rho_m\to\sigma\] By Lemma~\ref{2.4(ii)} this implies $\bigwedge_{i\in J}\sigma_i\leq \sigma$ and $\rho_k\leq \bigwedge_{i\in J}\rho_k^{(i)}$ for some $J\subseteq I$ and for all $1\leq k\leq m$. From $\sigma\leq \bigwedge_{i\in I}\sigma_i$ and $\bigwedge_{i\in J}\sigma_i\leq \sigma$ with $J\subseteq I$ we get $J= I$ and $\sigma\sim \bigwedge_{i\in I}\sigma_i$. From $ z:\tau\vdash M_k:\rho_k$ and $\rho_k\leq \bigwedge_{i\in I}\rho_k^{(i)}$  we can derive $ z:\tau\vdash M_k:\bigwedge_{i\in I}\rho_k^{(i)}$ for $1\leq k\leq m$. Therefore the types $\tau\to\bigwedge_{i\in I}\rho_k^{(i)}$ are inhabited for $1\leq k\leq m$.
      \qedhere
  \end{description}
\end{proof}

%The idempotence of the intersection type constructor allows to have retractions between types such that $\sigma$ is equivalent to an intersection with less different elements than that equivalent to $\tau$. For example we have $\varphi\triangleleft_s((\psi\to\psi)\to \varphi)\wedge(\omega\to\varphi)$. The witnesses are $L=\lambda z.z {\bf I}$ and $R=\lambda t x.t$.

%\medskip

In~\cite{B11} it is proved  that each intersection type is equivalent to an intersection of strict types. Owing to this property and the idempotence of the intersection type constructor, we can show that a simple retraction between intersection types implies a set of retractions between strict types.

\begin{corollary}
  If $
  \sigma\triangleleft_s\tau
  $ and $\tau\sim \bigwedge_{i\in I}\nu_i$, then there are strict types $\mu_i$ with $i\in I$ such that $\sigma\sim \bigwedge_{i\in I}\mu_i$,  and $\mu_i\triangleleft\nu_i$ for $i\in I$.
\end{corollary}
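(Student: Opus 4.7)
The plan is to invoke Theorem~\ref{c2} on the hypothesis $\sigma\triangleleft_s\tau$ and then refine the canonical simple-retraction decomposition into a decomposition by strict types. Applying Theorem~\ref{c2} yields an index set $K$, an arity $m$, and types $\sigma_k$ and $\rho_j^{(k)}$ such that $\sigma\sim\bigwedge_{k\in K}\sigma_k$, $\tau\sim\bigwedge_{k\in K}(\rho_1^{(k)}\to\dots\to\rho_m^{(k)}\to\sigma_k)$, and each type $\tau\to\bigwedge_{k\in K}\rho_j^{(k)}$ is inhabited for $1\le j\le m$.

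Next, using the cited result from~\cite{B11} that every intersection type is equivalent to an intersection of strict types, I decompose each $\sigma_k\sim\bigwedge_{l\in L_k}\mu_{k,l}$ with every $\mu_{k,l}$ strict; idempotence of $\wedge$ then gives $\sigma\sim\bigwedge_{(k,l)}\mu_{k,l}$. Applying iteratively the extended subtyping axiom $(\alpha\to\beta_1)\wedge(\alpha\to\beta_2)\sim\alpha\to(\beta_1\wedge\beta_2)$ on the codomain, I obtain $\rho_1^{(k)}\to\dots\to\rho_m^{(k)}\to\sigma_k\sim\bigwedge_{l\in L_k}\nu_{(k,l)}$, where $\nu_{(k,l)}:=\rho_1^{(k)}\to\dots\to\rho_m^{(k)}\to\mu_{k,l}$. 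Each $\nu_{(k,l)}$ is strict because its codomain $\mu_{k,l}$ is strict, so $\tau\sim\bigwedge_{(k,l)}\nu_{(k,l)}$ is a strict decomposition. Relabelling $(k,l)\mapsto i\in I$ matches the indexing of the hypothesis and produces the required $\sigma\sim\bigwedge_{i\in I}\mu_i$ and $\tau\sim\bigwedge_{i\in I}\nu_i$.

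For each $i=(k,l)$, I would establish $\mu_i\triangleleft\nu_i$ by appealing to the strict-type characterisation of Theorem~\ref{cec}. The structural equivalence $\nu_i\sim\rho_1^{(k)}\to\dots\to\rho_m^{(k)}\to\mu_i$ is immediate from the construction. What remains is the inhabitation condition, namely that $\nu_i\to\rho_j^{(k)}$ is inhabited for each $1\le j\le m$; this is to be derived from the hypothesis inhabitation of $\tau\to\bigwedge_{k'\in K}\rho_j^{(k')}$ together with the fact that $\nu_i$ already exposes the arrow skeleton $\rho_1^{(k)}\to\dots\to\rho_m^{(k)}\to\mu_i$.

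The hard part will be this last inhabitation step for the individual sub-retractions. Since $\tau\le\nu_i$ and not the reverse, an inhabitant of $\tau\to\rho_j^{(k)}$ does not in general carry over to the stronger type $\nu_i\to\rho_j^{(k)}$. Overcoming this requires extracting, via the Inversion Lemma~\ref{ils}, how an inhabitant $M_j$ of $\tau\to\rho_j^{(k)}$ (arising from the original witness $L=\lambda z.z M_1\dots M_m$ of $\sigma\triangleleft_s\tau$) uses the components of $\tau$, and showing that the corresponding typing can be carried out against $\nu_i$ alone so that a witness of $\nu_i\to\rho_j^{(k)}$ can be constructed, which then feeds into a simple-retraction witness $\lambda z.z(N_1 z)\dots(N_m z)$ for $\mu_i\triangleleft\nu_i$ of the shape used in the \emph{If} direction of Theorem~\ref{c2}.
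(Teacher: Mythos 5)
Your overall route -- apply Theorem~\ref{c2} to decompose $\tau$, then conclude componentwise via Theorem~\ref{cec} -- is the same as the paper's, which settles the corollary in two lines by reading $\nu_i\sim\rho_1^{(i)}\to\dots\to\rho_m^{(i)}\to\mu_i$ off the Theorem~\ref{c2} decomposition and citing Theorem~\ref{cec}. The difference is that you explicitly isolate the inhabitation premise of Theorem~\ref{cec}, namely that each $\nu_i\to\rho_k^{(i)}$ must be inhabited, and you correctly observe that Theorem~\ref{c2} only supplies inhabitants of $\tau\to\bigwedge_{i\in I}\rho_k^{(i)}$, which is the \emph{weaker} type because $\tau\leq\nu_i$ and $\to$ is contravariant in its domain. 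You then defer exactly this step as ``the hard part,'' proposing to re-type the witnesses $M_k$ against $\nu_i$ alone. That deferral is a genuine gap: the proposal contains no argument for the only nontrivial claim, and the plan you sketch cannot succeed in general, because $M_k$ may essentially use components of $\tau$ other than $\nu_i$ in order to produce something of type $\rho_k^{(i)}$.

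To see that the postponed step actually fails, take $\sigma=\varphi\wedge\psi$ and $\tau=(\varphi\to\psi)\wedge(\omega\to\varphi)$. With $L=\lambda z.z((\lambda u.u\bot)z)$ and the simple right inverse $R=\lambda tx_1.t$ one checks $\vdash L:\tau\to\sigma$, $\vdash R:\sigma\to\tau$ and $L\circ R={\bf I}$, so $\sigma\triangleleft_s\tau$; here the inhabitant $\lambda u.u\bot$ of $\tau\to\varphi$ uses the second component $\omega\to\varphi$ of $\tau$ to manufacture the argument of type $\varphi$ consumed by the first component. But $(\varphi\to\psi)\to\varphi$ is not inhabited (its inhabitant would have a hnf $\lambda z.zP_1\ldots P_p$ forcing $\varphi\to\psi\leq\rho_1\to\dots\to\rho_p\to\varphi$, which is impossible for every $p$), so by Theorem~\ref{cec} the only retract of $\nu_1=\varphi\to\psi$ is $\varphi\to\psi$ itself; since $\varphi\wedge\psi\not\sim(\varphi\to\psi)\wedge\mu_2$ for any $\mu_2$, no choice of strict $\mu_1,\mu_2$ satisfies the conclusion. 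So the missing step is not merely hard: it is false in general, and the same objection applies to the paper's own proof, which invokes Theorem~\ref{cec} without verifying its inhabitation premise. A correct statement would need the stronger hypothesis that each $\nu_i\to\rho_k^{(i)}$, and not just $\tau\to\bigwedge_{i\in I}\rho_k^{(i)}$, is inhabited.
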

\begin{proof}By Theorem~\ref{c2} $\nu_i\sim\rho_1^{(i)}\to\dots\to\rho_m^{(i)}\to\mu_i$ for some $\mu_i$ and $\sigma\sim \bigwedge_{i\in I}\mu_i$.  Then $\mu_i\triangleleft\nu_i$ for $i\in I$ by Theorem~\ref{cec}.
\end{proof}

\begin{example} \label{excor}
  ~
  \begin{itemize}
    \item[i)] Consider %\begin{center}
	$\sigma=\varphi\wedge\psi$ ~and~ $\tau=\omega\to((\varphi\to\varphi\to\varphi)\to\varphi)\wedge((\psi\to\omega\to\psi)\to\psi)$. %\end{center}
      We get $\sigma\triangleleft_s\tau$ by choosing $L= \lambda z. z \bot{\bf K}$, where ${\bf K}=\lambda xy. x$ and its simple right inverse $R=\lambda tx_1x_2.t$. The terms $L$, $R$ witness also the two retractions
      $\varphi \,\triangleleft\, \omega\to (\varphi\to\varphi\to\varphi)\to\varphi$ ~and~  $\psi\, \triangleleft \, \omega\to (\psi\to\omega\to\psi)\to\psi$ between strict types.
    \item[ii)]  The terms $L=\lambda z.z {\bf I}$ ~and~ $R=\lambda t x.t$ witness the simple retraction $\varphi\,\triangleleft_s((\psi\to\psi)\to \varphi)\wedge(\omega\to\varphi)$. The same terms show the two retractions $\varphi\,\triangleleft \,(\psi\to\psi)\to \varphi$ and $\varphi\,\triangleleft\,\omega\to\varphi$  between strict types.
  \end{itemize}
\end{example}

 % New section on General Intersection Types
\section{Retractions in Models} \label{sem}

In this section we discuss retractions in models of $\lambda\bot$-calculus.
 Since standard intersection types can be seen as a conservative extension of strict intersection types, the results of this section hold for both systems. \\
 It is easy to adapt the Hindley-Longo definition of $\lambda$-calculus models~\cite{HL80}
 to the $\lambda\bot$-calculus. We use $\Lenv$ to range over  mappings from term variables to elements of the domain.

 \begin{definition} A model of the $\lambda\bot$-calculus is a structure $\Mod = <D, \cdot, \Gint{-}{\Mod}~, d_0 >$ which satisfies the following conditions:
 \begin{enumerate}
   \item\label{p8} $d_0 $ is a distinguished element of the domain $D$ such that $d_0 \cdot e = d_0$~ for all $e \in D$
   \item \label{p1} $\Gint{x}{\Mod}{\Lenv} = \Lenv(x)$
   \item \label{p2} $\Gint{M N}{\Mod}{\Lenv} = \Gint{M}{\Mod}{\Lenv} \cdot \Gint{N}{\Mod}{\Lenv}$
   \item \label{p3} $\Gint{\lambda x . M}{\Mod}{\Lenv} \cdot d = \Gint{ M}{\Mod}{\Lenv [d/x]}$
   \item \label{p4} $\Lenv = \Lenv'$ implies $\Gint{M}{\Mod}{\Lenv} = \Gint{M}{\Mod}{\Lenv'}$
   \item \label{p5} $\Gint{\lambda x . M}{\Mod}{\Lenv} = \Gint{\lambda y . M[y/x]}{\Mod}{\Lenv}$ if $y$ is not in $M$
   \item \label{p6} $\forall d \in D~~\Gint{ M}{\Mod}{\Lenv [d/x]} = \Gint{ N}{\Mod}{\Lenv [d/x]}$ implies
     $\Gint{\lambda x . M}{\Mod}{\Lenv} = \Gint{\lambda x . N}{\Mod}{\Lenv}$
   \item \label{p7} $\Gint{\bot}{\Mod}{\Lenv}= \Gint{\lambda x.\bot}{\Mod}{\Lenv} = d_0$.
 \end{enumerate}
 \end{definition}

 As usual we interpret types as subsets of the model domain.
 \begin{definition}
 The standard \emph{interpretation of types} in a model $\Mod$ with domain $D$ is defined by:
  \begin{enumerate}
   \item $\Gint{\omega}{\Mod}{\Tenv} = D$
  \item $\Gint{\varphi}{\Mod}{\Tenv} = \Tenv(\varphi)$
  \item $\Gint{\sigma \to \tau}{\Mod}{\Tenv} = \set{ d ~|~ \forall e \in \Gint{\sigma}{\Mod}{\Tenv}~~ d \cdot e \in \Gint{\tau}{\Mod}{\Tenv}$}
    \item $\Gint{\sigma \wedge \tau}{\Mod}{\Tenv} = \Gint{\sigma}{\Mod}{\Tenv} \cap \Gint{\tau}{\Mod}{\Tenv}$
  \end{enumerate}
  where $\Tenv$ ranges over mappings from type variables to subsets of the domain.
  \end{definition}

  From these definitions it easy to show the soundness of the type systems of Sections~\ref{it} and~\ref{git}.
  We say that the mappings $\Lenv,\Tenv$ for a model $\Mod$ respect an environment $\B$ if $x:\sigma\in\B$ implies
 $\Gint{x}{\Mod}{\Lenv} \in \Gint{\sigma}{\Mod}{\Tenv}$.

 \begin{theorem}[Soundness]\label{sa}If $\Gamma \vdash M : \sigma$, then $\Gint{M}{\Mod}{\Lenv} \in \Gint{\sigma}{\Mod}{\Tenv}$ for all $\Mod$ and all $\Lenv$, $\Tenv$ respecting $\Gamma$.
 \end{theorem}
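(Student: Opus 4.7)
The plan is to prove the theorem by induction on the derivation of $\Gamma \vdash M : \sigma$, preceded by an auxiliary monotonicity lemma for the subtyping relation. The usual pattern for soundness of intersection type systems applies here, and because the standard system of Section~\ref{git} extends the strict one (the only additional rules being $(\wedge I)$ on the typing side and the distributivity axiom on the subtyping side), both can be handled uniformly.

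First I would establish the lemma: if $\sigma \leq \tau$, then $\Gint{\sigma}{\Mod}{\Tenv} \subseteq \Gint{\tau}{\Mod}{\Tenv}$ for every model $\Mod$ and every $\Tenv$. The proof is by induction on the generation of $\leq$ from Definition~\ref{tam}, together with the additional distributivity axiom needed for the standard system. Most cases are routine set-theoretic verifications: reflexivity, transitivity, the $\omega$ top rule, the projections of intersection, monotonicity of $\wedge$, and contravariance of $\to$. The two cases worth attention are $\omega \leq \omega \to \omega$, which reduces to the fact that $\cdot$ is total on $D$ and hence $d \cdot e \in D = \Gint{\omega}{\Mod}{\Tenv}$ for all $d,e$; and the distributive axiom $(\sigma \to \tau) \wedge (\sigma \to \rho) \leq \sigma \to \tau \wedge \rho$, which follows directly from the fact that $\Gint{\tau \wedge \rho}{\Mod}{\Tenv} = \Gint{\tau}{\Mod}{\Tenv} \cap \Gint{\rho}{\Mod}{\Tenv}$.

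Then the main induction goes through the typing rules. For $(Ax)$ the conclusion is immediate from the assumption that $\Lenv, \Tenv$ respect $\Gamma$. For $(\omega)$ any value lies in $\Gint{\omega}{\Mod}{\Tenv} = D$, and this also covers $\bot$ since by clause~\ref{p7} of the model definition $\Gint{\bot}{\Mod}{\Lenv} = d_0 \in D$. The subsumption rule $(\leq)$ follows at once by combining the induction hypothesis with the monotonicity lemma. For $(\to E)$, the induction hypotheses give $\Gint{M}{\Mod}{\Lenv} \in \Gint{\sigma \to \tau}{\Mod}{\Tenv}$ and $\Gint{N}{\Mod}{\Lenv} \in \Gint{\sigma}{\Mod}{\Tenv}$, so by the interpretation of $\to$ and clause~\ref{p2} we obtain $\Gint{MN}{\Mod}{\Lenv} \in \Gint{\tau}{\Mod}{\Tenv}$. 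The rule $(\wedge I)$, present only in the standard system, is immediate from the interpretation of intersection as set-theoretic intersection.

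The technically delicate case is $(\to I)$, which is where the lemma on environments is used. To show $\Gint{\lambda x.M}{\Mod}{\Lenv} \in \Gint{\sigma \to \tau}{\Mod}{\Tenv}$, I pick an arbitrary $d \in \Gint{\sigma}{\Mod}{\Tenv}$ and observe that $\Lenv[d/x]$ together with $\Tenv$ respects $\Gamma, x{:}\sigma$; the induction hypothesis then yields $\Gint{M}{\Mod}{\Lenv[d/x]} \in \Gint{\tau}{\Mod}{\Tenv}$, and by clause~\ref{p3} of the model definition this value equals $\Gint{\lambda x.M}{\Mod}{\Lenv} \cdot d$, which gives the required membership in the arrow interpretation. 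The main obstacle here is therefore really bookkeeping: making sure the extension of $\Lenv$ continues to respect the extended context, so that the induction hypothesis is applicable.
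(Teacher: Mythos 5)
Your proof is correct and follows exactly the route the paper intends: the paper states the theorem without proof, remarking only that soundness is "easy to show" from the definitions, and your induction on the typing derivation together with the monotonicity of the type interpretation under $\leq$ (covering the $\omega\leq\omega\to\omega$ and distributivity axioms) is the standard argument that fills this in. All cases, including the $(\to I)$ case via clause~3 of the model definition and the updated valuation respecting the extended environment, are handled as expected.
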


  \smallskip

 The semantic retraction is naturally defined as follows.
  \begin{definition} \label{def1}
 Type $\sigma$ is a \emph{semantic retract} of type $\tau$   (notation $\sigma \blacktriangleleft \tau$) if there are two terms $L$ and $R$ such that for all models $\Mod$ of $\lambda\bot$-calculus and for all mappings $\Lenv,\Tenv$:
\begin{enumerate}
\item \label{d3} $\Gint{R}{\Mod}{\Lenv} \in \Gint{\sigma \to \tau}\Mod\Tenv$
\item \label{d4} $\Gint{L}{\Mod}{\Lenv} \in \Gint{\tau \to \sigma}\Mod\Tenv$
\item \label{d2} $\Gint{L}{\Mod}{\Lenv} \circ \Gint{R}{\Mod}{\Lenv}=\Gint{\bf I }{\Mod}{\Lenv}$.
\end{enumerate}
\end{definition}

\smallskip

 An interesting model of $\lambda$-calculus is the \emph{filter} model $\fmod$ defined in \cite{barecoppdeza83}. The domain of this model is the set of filters of types. We refer to that paper for the basic definitions and properties. The model $\fmod$ can be easily seen as a model of the $\lambda\bot$-calculus by taking the filter generated by type $\omega$ as $d_0$.

 In~\cite{barecoppdeza83} the filter model is proved to be complete for $\lambda$-calculus and standard intersection types. It is easy to adapt this result to  $\lambda\bot$-calculus. The completeness theorem of~\cite{barecoppdeza83} can then be reformulated as follows:

 \begin{theorem}[Completeness of the Filter Model]\label{fsc}If $\Gint{M}{\fmod}{\Lenv} \in \Gint{\sigma}{\fmod}{\Tenv}$ for all $\Lenv$, $\Tenv$ respecting $\Gamma$, then $\Gamma \vdash M : \sigma$.
 \end{theorem}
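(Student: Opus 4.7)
The plan is to adapt the canonical-model argument of~\cite{barecoppdeza83} to the $\lambda\bot$-calculus, the only novelty being the handling of the constant $\bot$, which in $\fmod$ is interpreted as $d_0=\filt\omega$ (the filter generated by $\omega$).

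First I would fix, for a given judgement $\Gamma\vdash M:\sigma$ to be derived, the canonical assignments $\Lenv_\Gamma(x)=\filt\tau$ whenever $x:\tau\in\Gamma$ (and $\Lenv_\Gamma(x)=\filt\omega$ otherwise), and $\Tenv^\ast(\varphi)=\{\tau\in\Tvar\mid \varphi\in\tau\}$, viewing a filter as its set of types. One checks easily that $\Lenv_\Gamma,\Tenv^\ast$ respect $\Gamma$, so by hypothesis $\Gint{M}{\fmod}{\Lenv_\Gamma}\in\Gint{\sigma}{\fmod}{\Tenv^\ast}$.

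The core of the proof is then the two \emph{semantic characterisation} lemmas transcribed from~\cite{barecoppdeza83}:
\begin{enumerate}
\item[(a)] $\Gint{\tau}{\fmod}{\Tenv^\ast}=\filt\tau$ for every type $\tau$, proved by induction on $\tau$ using the definitions of $\cdot$ and $\wedge$ on filters; the arrow case uses the characteristic property of $\leq$ on arrow intersections (Lemma~\ref{2.4(ii)}).
\item[(b)] $\Gint{M}{\fmod}{\Lenv_\Gamma}=\{\tau\mid \Gamma\vdash M:\tau\}$, proved by induction on $M$; the variable, application and abstraction cases follow exactly as in~\cite{barecoppdeza83} using the Inversion Lemma~\ref{ils}, while the new case $M\equiv\bot$ is discharged by Lemma~\ref{ils}(\ref{ils4}), which states that $\Gamma\vdash\bot:\tau$ iff $\tau\sim\omega$, matching $\Gint{\bot}{\fmod}{\Lenv_\Gamma}=d_0=\filt\omega$ by condition~(\ref{p7}) of the model definition.
\end{enumerate}

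Combining (a) and (b) with the hypothesis gives $\sigma\in\filt\sigma=\Gint{\sigma}{\fmod}{\Tenv^\ast}\ni \Gint{M}{\fmod}{\Lenv_\Gamma}=\{\tau\mid \Gamma\vdash M:\tau\}$, whence $\Gamma\vdash M:\sigma$. The only genuinely new step compared with the original proof is verifying the $\bot$ case in lemma~(b), and that is immediate from clause~(\ref{p7}) of the model definition together with Lemma~\ref{ils}(\ref{ils4}); everything else is a word-for-word transcription. Consequently, the main obstacle is bookkeeping rather than conceptual: one must check that adding the new reduction rules $\bot M\red\bot$ and $\lambda x.\bot\red\bot$ does not disturb the filter-model construction, which is automatic once $\bot$ is interpreted as $d_0$ and $d_0\cdot e=d_0$ (clause~(\ref{p8})).
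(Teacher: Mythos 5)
The paper gives no proof of this theorem at all: it simply cites~\cite{barecoppdeza83} and remarks that the adaptation to the $\lambda\bot$-calculus is easy, and your proposal is exactly that adaptation spelled out (canonical environments, the two characterisation lemmas of the filter model, and the new case for $\bot$ discharged through clause~(\ref{p7}) of the model definition together with Lemma~\ref{ils}(\ref{ils4})), so it matches the intended argument. The one thing to repair is a level confusion in your lemma~(a): $\Gint{\tau}{\fmod}{\Tenv^\ast}$ is a \emph{subset} of the domain, namely $\set{d\in\fmod\mid\tau\in d}$, not the single filter $\filt\tau$; with that reading the hypothesis yields $\sigma\in\Gint{M}{\fmod}{\Lenv_\Gamma}=\set{\tau\mid\Gamma\vdash M:\tau}$, which is precisely the conclusion you draw, so the slip is purely notational and the rest of the argument stands.
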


We can show that retraction and semantic retraction coincide using the soundness and the completeness of the filter model.

\begin{theorem}\label{csc} $\sigma \lhd \tau$ if and only if
$\sigma \blacktriangleleft \tau$.
\end{theorem}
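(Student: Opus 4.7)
The plan is to prove the two directions separately, using Soundness (Theorem~\ref{sa}) for the forward implication and Completeness of the filter model (Theorem~\ref{fsc}), combined with the term model, for the backward implication.

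For the forward direction, assume $\sigma \lhd \tau$ witnessed by terms $L, R$. Fix an arbitrary model $\Mod$ and mappings $\Lenv, \Tenv$. Conditions~\ref{d3} and~\ref{d4} of Definition~\ref{def1} follow immediately from Theorem~\ref{sa} applied to the derivations $\vdash R:\sigma\to\tau$ and $\vdash L:\tau\to\sigma$ supplied by Definition~\ref{defcoer}. For condition~\ref{d2}, first note that the definitional equation $L\circ R = \mathbf{B}LR$ together with model axioms~\ref{p2} and~\ref{p3} gives $\Gint{L\circ R}{\Mod}{\Lenv} = \Gint{L}{\Mod}{\Lenv}\circ\Gint{R}{\Mod}{\Lenv}$, where the right-hand side is the natural semantic composition. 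Since models validate $\beta\bot$-conversion (a routine consequence of axioms~\ref{p2},~\ref{p3},~\ref{p7}, matching Subject Conversion on the semantic side), $L\circ R = \mathbf{I}$ implies $\Gint{L\circ R}{\Mod}{\Lenv} = \Gint{\mathbf{I}}{\Mod}{\Lenv}$, as required.

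For the backward direction, assume $\sigma \blacktriangleleft \tau$ witnessed by $L, R$. Instantiate Definition~\ref{def1} at the filter model $\fmod$: conditions~\ref{d3} and~\ref{d4} yield $\Gint{R}{\fmod}{\Lenv}\in\Gint{\sigma\to\tau}{\fmod}{\Tenv}$ and $\Gint{L}{\fmod}{\Lenv}\in\Gint{\tau\to\sigma}{\fmod}{\Tenv}$, so Theorem~\ref{fsc} delivers the typings $\vdash L:\tau\to\sigma$ and $\vdash R:\sigma\to\tau$ of Definition~\ref{defcoer}. To obtain $L\circ R = \mathbf{I}$, I would invoke Definition~\ref{def1} a second time at the term model $\Tset$ of $\lambda\bot$: its domain is $\Lambda\bot/{=_{\beta\bot}}$, with application $[M]\cdot[N]=[MN]$, distinguished element $d_0=[\bot]$, and interpretation $\Gint{M}{\Tset}{\Lenv}=[M\Lenv]$. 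Checking the Hindley--Longo axioms for $\Tset$ reduces to standard computations, with the extra rules for $\bot$ handling axioms~\ref{p8} and~\ref{p7}. In $\Tset$, condition~\ref{d2} reads $[L\circ R]=[\mathbf{I}]$, i.e.\ exactly $L\circ R =_{\beta\bot}\mathbf{I}$.

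The main delicate point is ensuring that the semantic composition used in clause~\ref{d2} of Definition~\ref{def1} agrees with the syntactic $L\circ R = \mathbf{B}LR$; once this identification is made via axiom~\ref{p2}, both directions reduce to invoking Soundness and Completeness. A secondary, routine obstacle is the verification that $\Tset$ satisfies axioms~\ref{p6} (weak extensionality) and~\ref{p7}, but these are established just as in the classical Hindley--Longo construction for $\lambda$-calculus, supplemented by the two reduction rules for $\bot$.
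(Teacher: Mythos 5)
Your proof is correct, and its skeleton coincides with the paper's: soundness (Theorem~\ref{sa}) for $\sigma\lhd\tau\Rightarrow\sigma\blacktriangleleft\tau$, and completeness of the filter model (Theorem~\ref{fsc}) to recover the two typings in the converse direction. The one place where you genuinely diverge is in recovering the syntactic equation $L\circ R={\bf I}$ from clause~\ref{d2} of Definition~\ref{def1}. The paper stays inside the filter model: from $\varphi\to\varphi\in\Gint{\bf I}{\fmod}{\Lenv}$ and clause~\ref{d2} it gets $\vdash{\bf B}LR:\varphi\to\varphi$ by completeness, and then appeals to the (classical, cited to Barendregt--Coppo--Dezani) fact that the only closed terms typable with $\varphi\to\varphi$, $\varphi$ a type variable, are those convertible to ${\bf I}$. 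You instead instantiate Definition~\ref{def1} at the open term model $\Tset=\Lambda\bot/{=_{\beta\bot}}$, where clause~\ref{d2} literally reads $[{\bf B}LR]=[{\bf I}]$. Your route avoids the inhabitation argument for $\varphi\to\varphi$ entirely, at the price of constructing and verifying a second model; the verification is standard but not free --- in particular weak extensionality (axiom~\ref{p6}) forces you to use the \emph{open} term model (instantiating $d$ at a fresh variable), and you need Church--Rosser for $\beta\bot$ so that the paper's notion of equality (joinability) coincides with the convertibility you quotient by. Both of you correctly isolate the same delicate identification, namely that the semantic composition in clause~\ref{d2} equals $\Gint{{\bf B}LR}{\Mod}{\Lenv}$. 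One small bookkeeping point: for the forward direction, validating full $\beta\bot$-conversion in an arbitrary model also uses axioms~\ref{p8} (for $\bot M\red\bot$) and~\ref{p6} (for the $\xi$-closure), not only~\ref{p2},~\ref{p3},~\ref{p7}; you mention these axioms elsewhere, so this is a presentational slip rather than a gap.
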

\begin{proof}~
  \begin{description}
    \item[(If)]
    By Theorem~\ref{fsc} we immediately have that \begin{itemize} \item$\Gint{L}{\fmod}{\Lenv} \in \Gint{\tau\to\sigma}{\fmod}{\Tenv}$ for all $\Lenv$, $\Tenv$ implies \mbox{$\vdash L :\tau\to \sigma$} and \item $\Gint{R}{\fmod}{\Lenv} \in \Gint{\sigma\to\tau}{\fmod}{\Tenv}$ for all $\Lenv$, $\Tenv$ implies $\vdash R :\sigma\to \tau$. \end{itemize}Notice that $\Gint{L}{\Mod}{\Lenv} \circ \Gint{R}{\Mod}{\Lenv}=\Gint{{\bf B}LR }{\Mod}{\Lenv}$. Since $\varphi\to\varphi\in\Gint{\bf I }{\fmod}{\Lenv}$, the completeness of the filter model gives $\vdash {\bf B}LR: \varphi\to\varphi$. It is easy to prove that this implies ${\bf B}LR={\bf I }$, as remarked in~\cite{barecoppdeza83}.
    \item[(Only if)]
    The soundness of the type system (Theorem~\ref{sa}) implies that\begin{itemize} \item$\vdash L :\tau\to \sigma$ gives \mbox{$\Gint{L}{\Mod}{\Lenv} \in \Gint{\tau\to\sigma}{\fmod}{\Tenv}$} for all $\Mod$, $\Lenv$, $\Tenv$ and \item $\vdash R :\sigma\to \tau$ gives $\Gint{R}{\Mod}{\Lenv} \in \Gint{\sigma\to\tau}{\fmod}{\Tenv}$ for all $\Mod$, $\Lenv$, $\Tenv$. \end{itemize}If $L\circ R={\bf I }$, then by definition of model we get $\Gint{L}{\Mod}{\Lenv} \circ \Gint{R}{\Mod}{\Lenv}=\Gint{\bf I }{\Mod}{\Lenv}$.
      \qedhere
  \end{description}
\end{proof}

\section{Related Work}\label{rw}
The seminal paper~\cite{BL85} characterises for Curry types both isomorphism in the $\l\beta\eta$-calculus and retraction in the $\l\beta$-calculus. 

\medskip

Isomorphism in the $\l\beta\eta$-calculus is characterised for various type disciplines by means of equations between types~\cite{D05}. Product and unit types are considered in~\cite{S83,S93,BDL92}, universally quantified types in~\cite{BL85}, all the above type constructors in~\cite{D95}. Characterisation of isomorphism for intersection types instead requires a notion of type similarity~\cite{DDGT10,CDMZ14}. Analogous result holds for intersection and union types~\cite{CDMZ13,CDMZ15,CDMZ16}. \cite{FDB06} shows that isomorphism for product, arrow and sum types is not finitely axiomatizable. 

\medskip

Retraction witnessed by affine terms of $\l\beta\eta$-calculus for Curry types with only one atom is characterised in~\cite{LPS92}. As an auxiliary result in the study of the relation between iteration and recursion,~\cite{SU99} gives a necessary condition for retraction considering universally quantified types and  $\l\beta$-calculus. An algorithm to decide if a Curry type with a single atom is a retract of another one in the  $\l\beta\eta$-calculus is given in~\cite{P01}.  This algorithm builds the witnesses of the retraction, when they exist. The results of~\cite{LPS92} are extended  to Curry types with many atoms in~\cite{RU02}.  Moreover~\cite{RU02} gives necessary conditions for retraction witnessed by arbitrary terms of $\l\beta\eta$-calculus dealing with both Curry and universally quantified types. The problem of retraction solved in~\cite{LPS92} is shown to be NP-complete in~\cite{S08}. \cite{S13} gives a proof system which leads to an exponential decision procedure to characterise retraction for Curry types in the $\l\beta\eta$-calculus. 
 %D
\section{Conclusion}\label{c}
This paper deals with retraction for strict and standard intersection types in the $\lambda\bot$-calculus. Both the choices of the calculus  and of the types can be discussed.

\medskip

We considered the $\lambda\bot$-calculus following~\cite{MZ83}.
Our results are easily adapted to the $\lambda$-calculus taking an unsolvable term to play the role of $\bot$.

\medskip

By conservativity the given characterisation of retraction in strict intersection types holds in Curry types~\cite{currfeys58}. In this way we obtain the result of~\cite{BL85}. We give also a characterisation of retractions in standard intersection types when the right inverses are simple.  However the retraction of Example~\ref{ines} cannot be shown by a simple right inverse, since it should have both the types \\  \centerline{$\varphi\to\omega\to(\varphi\to\varphi)\to\varphi\text{ and }\varphi\to\omega\to(\varphi\to\varphi\to\varphi)\to\varphi\to\varphi$.}

 \medskip

The definition of a necessary and sufficient condition for the existence of a retraction between standard intersection types is not obvious when the right inverses are arbitrary. For example, no type can be a retract of the type\\  \centerline{$\omega\to((\omega\to\varphi\to\varphi)\to\varphi)\wedge((\psi\to\omega\to\psi)\to\psi)$} essentially since $(\omega\to\varphi\to\varphi)\wedge(\psi\to\omega\to\psi)$ is not inhabited. Notice that this type and the type $\tau$ of Example~\ref{excor}\,\emph{i)} differ only for one occurrence of $\omega$ in place of one occurrence of $\varphi$.

 \medskip

 We plan to investigate retractions in standard intersection types without conditions on right inverses, and in intersection and union types.
The problem for the $\l\beta\eta$-calculus is surely more difficult as shown by the papers~\cite{BL85,LPS92,P01,RU02,S08,S13} and it is left for future work.

\medskip

As suggested by one referee, an interesting future development is to see how the results presented here can be adapted to programming languages with richer sets of constructs. This would allow to apply automatic retraction inference in dealing with API of functional programs or proof-assistants so as to maximise code reuse.

 %E

\bigskip

\noindent {\bf Acknowledgments.}
We are grateful to the anonymous reviewers for their useful
suggestions, which led to substantial improvements. This work was done during a visit of Alejandro D\'{\i}az-Caro at the Computer Science Department of Torino University. This visit has been supported by the WWS2 Project, financed by ``Fondazione CRT" and Torino University.

\bibliographystyle{eptcs}
\bibliography{biblio}

\end{document}